\renewcommand{\baselinestretch}{1.25}
\let\Oldsection\section
\renewcommand{\section}{\FloatBarrier\Oldsection}
\let\Oldsubsection\subsection
\renewcommand{\subsection}{\FloatBarrier\Oldsubsection}
\let\Oldsubsubsection\subsubsection
\renewcommand{\subsubsection}{\FloatBarrier\Oldsubsubsection}
\newcommand{\ie}{{\it i.e.}}
\newcommand{\cf}{{\it c.f.}}
\newcommand{\eg}{{\it e.g.}}
\newcommand{\iid}{{\it i.i.d.}}
\newcommand{\PP}{{\mathbb P}}
\newcommand{\RR}{{\mathbb R}}
\newcommand{\NN}{{\mathbb N}}
\newcommand{\ZZ}{{\mathbb Z}}
\newcommand{\FB}{\mathbb{F}_B}
\newcommand{\FA}{\mathbb{F}_A}
\newcommand{\DB}{\mathbb{D}_{B}}
\newcommand{\DA}{\mathbb{D}_{A}}
\newcommand{\Bin}{\mathrm{Bin}}
\newcommand{\calX}{\mathcal{X}}
\newcommand{\ft}[2]{{\textstyle{\frac{#1}{#2}}}}
\newcommand{\conv}[1]%
  {{\mathrel{\,\xrightarrow{\widthof{\,#1\,}}\,}}}
\newcommand{\convas}[1]%
  {{\mathrel{\,\xrightarrow{\widthof{\,#1\text{-a.s.}\,}}\,}}}
\newcommand{\convprob}[1]%
  {{\mathrel{\,\xrightarrow{\widthof{\,#1\,}}\,}}}
\newcommand{\convweak}[1]%
  {{\mathrel{\,\xrightarrow{\widthof{\,#1\text{-w.}\,}}\,}}}
\newcommand{\ubar}[1]{\text{\b{$#1$}}}
\DeclareRobustCommand{\munderbar}[1]{\underaccent{\bar}{#1}}
\renewcommand{\qedsymbol}{$\Box$}
\newtheoremstyle{customtheorem}
  {0.5em}
  {0.2em}
  {\itshape}
  {}
  {\scshape}
  {}
  {1ex}
  {}
\theoremstyle{customtheorem}
\newtheorem{theorem}{Theorem}[section]
\newtheorem{lemma}[theorem]{Lemma}
\newtheorem{proposition}[theorem]{Proposition}
\newtheorem{definition}[theorem]{Definition}
\newtheoremstyle{customremark}
  {0.5em}
  {0.2em}
  {}
  {}
  {\scshape}
  {}
  {1ex}
  {}
\theoremstyle{customremark}
\renewenvironment{proof}{\par\noindent{\scshape Proof}\;}{\hfill\qedsymbol\par}
\newtheorem{remark}[theorem]{Remark}
\begin{document}

\thispagestyle{empty}

\title{
Clearing price distributions in call auctions
 }
\author{
  M. Derksen\thanks{Corresponding author.
Email: m.j.m.derksen@uva.nl} $\dag \ddag$, B. Kleijn$\ddag$ and R. de Vilder$ \dag \ddag$ \\[1mm]
  {\small\it  $\dag$ Deep Blue Capital N.V., Amsterdam}\\
  {\small\it  $\ddag$ Korteweg-de~Vries Institute for Mathematics,
    University of Amsterdam}
  }
\date{\today}
\maketitle

\begin{abstract}\noindent
We propose a model for price formation in financial markets
based on clearing of a standard call auction with
random orders, and verify its validity for prediction of the
daily closing price distribution statistically.
The model considers random buy and sell orders, placed
following demand- and supply-side valuation distributions; an
equilibrium equation then leads to a distribution for clearing
price and transacted volume. Bid and ask
volumes are left as free parameters, permitting possibly
heavy-tailed or very skewed order flow conditions. In
highly liquid auctions, the clearing price distribution
converges to an asymptotically normal central limit, with
mean and variance in terms of supply/demand-valuation
distributions and order flow imbalance. By means of simulations, we
illustrate the influence of variations in order flow and
valuation distributions on price/volume, noting a distinction between
high- and low-volume auction price variance. To verify the validity of
the model statistically, we predict a year's worth of daily
closing price distributions for 5 constituents of the Eurostoxx 50
index; Kolmogorov-Smirnov statistics and QQ-plots demonstrate
with ample statistical significance that the model predicts
closing price distributions accurately, and compares
favourably with alternative methods of prediction.
\end{abstract}


\section{Introduction}
\label{sec:intro}
In modern financial markets most securities are traded in
continuous double auctions. During the trading day a
sell/buy order for a price lower/higher than or equal to the
best bid/ask price is immediately executed versus the
limit order book on the bid/ask side. If a sell/buy-order has a price
higher/lower than the best bid/ask, it is added to the
limit order book on the ask/bid side. To start and stop trading and determine
daily opening and closing prices, standard call auctions are
conducted for most securities. In these opening and closing auctions buy and sell
orders are collected over a set interval in time, after which a
\emph{clearing price} $X$ is determined to clear the maximal
executable volume \cite{EN}, transacting all against the price $X$.

A large part of the market microstructure literature focusses on detailed modelling of
continuous double auctions and the limit order book. There are essentially two different lines of work: equilibrium models in which order arrival is governed by decisions of individual agents trying to maximize utility (see \eg\ \cite{Parlour98,Foucault99,Goettleretal05,Rosu09,Bressanfacchi13,Bressanwei16} ) and stochastic limit order book models in which order arrival is completely stochastic 
(see \eg, among many others, \cite{Luckock03,Smithetal03,Contetal10,Abergeljedidi13,Contdelarrard13,Toke15_2}).
Some extensive
studies of empirical properties of the limit order book can be found in \cite{Biaisetal95,ChalletStinchcombe01,Bouchaudetal02,PottersBouchaud03}.
The standard call auction has received less attention: 
\cite{Mendelson82} models a call auction in which all
orders have size one and are uniformly distributed over some
price interval, while buy and sell orders arrive \cf\ a homogeneous
Poisson process. The distribution of transacted volume is derived,
together with the clearing price expectation. Technically, this
paper is related to the work of \cite{Toke15}, who gives
the full solution of Mendelson's call auction model,
deriving distributions for transacted volume, and lower/upper
clearing prices, as well as asymptotic distributions in very
liquid call auctions.

At the conceptual level, our approach is related to the
seminal paper by \cite{Smithetal03}, who consider a
statistical model for continuous double auctions assuming
\iid\ random order flow, modelled through independent, homogeneous
Poisson processes for market orders, limit orders and cancellations
with random order-prices from a single, uniform valuation distribution.
Simulations, dimensional analysis and mean-field approximations then
lead to predictions for price volatility, market depth, price-impact
function, bid-ask spread and probability/time to fill a limit order.

In this paper we propose a model for price formation
in financial markets with a \emph{bid/ask equilibrium equation} at
its core, that sets the \emph{clearing price} such as to lead to
maximal transacted volume, based on fixed numbers $N_A,N_B$ of
unit-sized sell and buy orders forming \iid\ samples from
distinct \emph{valuation distributions} $F_A$ and $F_B$. Due to the
randomness in the orders, the equilibrium gives rise to a
distribution for $X|N_A,N_B$, the clearing price conditional
on $N_A,N_B$. The shape of the valuation distributions $F_A,F_B$ and
the distribution of  the pair $(N_A,N_B)$ remain unspecified;
while the former models order density, the latter permits
great freedom of modelling order flow conditions, including
auctions in which extreme or skewed liquidity-conditions disturb
equilibria and distort clearing prices. We derive closed-form
expressions for distributions of clearing prices, jointly with
transacted volumes.

Such mechanisms have direct application in the modelling of
opening and closing auctions as demonstrated with data from
intraday transactions to predict closing price distributions of several constituents of the Eurostoxx 50 index (roughly speaking, this index consists of the 50 main Eurozone companies) in
section~\ref{sec:application}. Extending the argument more
informally, we argue that the model applies also in
\emph{continuous} trading: if
buy/sell orders are accrued over a period of time (and liquidity
providers trade with a more-or-less neutral combined inventory)
then, at the aggregate level, the detailed
process of trading during the period can be interpreted
as market-clearing at a price $X$ with a distribution
that depends on valuation distributions $F_A,F_B$ and the
distribution of the pair $(N_A,N_B)$ that reflects order flow
conditions during the interval. If liquidity providers do not trade
neutrally, or if we take a limit order book into account, the
equilibrium between newly accrued buy and sell orders is
perturbed by so-called \emph{excess liquidity}, which can be taken
into account in full generality and lies at the heart of many
interesting properties associated with real-world phenomena.

The remainder of this article is structured as follows. In section \ref{sec:model}, the model is introduced, probability distributions for clearing price and volume are derived and several proposals for the order flow distributions are made. In section~\ref{sec:asymp}, we consider auctions in which
the number of incoming orders is very
large. Asymptotically the
clearing price has a normal distribution, which implies that if we
approximate continuous trading by a periodically cleared market,
the resulting discrete price process follows a Brownian
path. This is roughly in support of general pricing models
based on the efficient market hypothesis, with mean and variance
of the return distribution expressed in terms of the distributions
of supply, demand and order flow.  In section~\ref{sec:supdem} we explore how changing supply and demand distributions affect the joint distribution of clearing price and transacted volume, leading to a distinction between two different types of auction price variance; one occuring when transacted volumes are high, the other one when these are low.
In section ~\ref{sec:impact} we study the model's perspective on the \emph{price impact} of market orders. Remarkably, the model
reproduces the concave price impact functions observed empirically
\cite{Hasbrouck91,Lilloetal03,Donierbonart15} and explained theoretically
\cite{Smithetal03,Donieretal15,Benzaquenbouchaud18}. In section~\ref{sec:application} the model
is applied to estimate the distribution of the clearing price of
a closing auction, based on the day's transaction data.
For 5 (randomly selected) constituents of the Eurostoxx 50 index, it is
shown that the model predicts the probability distribution of the
closing price with precision, through assessment of QQ-plots
and Kolgomorov-Smirnov statistics. For comparison, a more crude
alternative method of estimation is assessed on the same basis.
It is shown that the market clearing model provides significantly
better estimates for clearing price distributions than this more
straightforward method. Most important results are summarized
in the concluding section~\ref{sec:conclusion}. Proofs of the
theoretical results of sections~\ref{sec:model} and~\ref{sec:asymp}
as well as notation and conventions are collected in
appendix~\ref{app:proofs}.


\section{Stochastic market clearing}
\label{sec:model}

In this section, we introduce the model and derive expressions for the
distributions of central quantities in the clearing process.


\subsection{Supply/demand equilibrium}

Let us consider a standard call auction for a given asset. In the auction,
buy and sell orders are matched to transact at a clearing price $X$,
determined in such a way that the total transacted volume is maximal.
Suppose that $N_A$ sell orders are submitted, as well as $N_B$ buy
orders and that every order has equal size (set to one). We assume that
participants on both sides of the market formulate their orders
independently of each other, according to certain \emph{valuation distributions}.
That is, we model the ask prices as an \iid\ sample $(A_1, \dots A_{N_A})$
from a \emph{supply (or ask) distribution} $F_A$ and the bid prices as
an \iid\ sample $(B_1, \dots, B_{N_B})$ from a \emph{demand (or bid)
distribution} $F_B$. The interpretation of $F_A$ is as follows: the
probability that a randomly selected seller is
willing to sell the asset for an ask price $A \leq x$, is given by $F_A(x)$, for
all $x \in \RR$. Similarly, if we randomly select a buyer, the
probability that he is willing to buy the asset for a bid price $B \leq x$ is
given by $F_B(x)$. Naturally ask prices are higher than bid
prices, however, the ordering is expressed through the supply and demand 
distributions $F_A,F_B$, through the assumption that,
\begin{equation}
  \label{eq:stochorder}
  F_A(x)=\PP(A\leq x)\leq P(B\leq x)=F_B(x)
\end{equation}
for all $x\in\calX$. Note that the general ordering of buy prices below sell prices
cannot be defined in very strict or deterministic ways; the 
uncertainty in $A,B$ and the stochastic nature of the ordering
enables crossing prices and thereby, matchable orders and the auction
itself. For reasons of technical feasibility, it is assumed that buyers and sellers formulate their quotes independently, \ie\ bid-
and ask-samples are independent \iid\ samples. \footnote{Of course these independence assumptions are not realistic: especially when prices fluctuate a lot, it is likely that market participants on both sides of the market react on each other's decisions and hence their quotes are far from independent. However, we argue that, despite these simplifying assumptions, the model can still be interpreted as a reasonable description of price formation in auctions, as is confirmed by the results in Section \ref{sec:application}.}
Denote by $\FB$ and $\FA$ the empirical distribution functions associated
with the bid- and ask-samples $(B_1, \dots,B_{N_B})$ and
$(A_1,\dots,A_{N_A})$, that is,
\[
  \FB(x) = \frac{1}{N_B}\sum_{j=1}^{N_B} 1_{\{B_j \leq x \}},
  \quad
  \FA(x) = \frac{1}{N_A} \sum_{i=1}^{N_A} 1_{\{A_i \leq x\}}.
\]
For every $x \in \RR$, denote the number of submitted sell orders with a
price less than or equal to $x$ by $\DA(x)$ and the number of submitted
buy orders with a price greater than $x$ by $\DB(x)$.
As discussed above, the \emph{clearing price} $X$ is obtained by maximizing the
total transacted volume. In terms of the above defined quantities, that
implies $X$ is defined as a solution of the \emph{market clearing equation}
$\DA(X) = \DB(X)$, or,
\begin{equation}
\label{eq:eq_equation}
  N_A \FA(X) = N_B\bigl(1-\FB(X)\bigr), 
\end{equation}
which expresses that the transacted volume is maximized at (any) price $X$
where the supply curve $\DA$ and the demand curve $\DB$ intersect.
Consider the following definition.
\begin{definition}
\label{def:eqprice}
For a given sell order sample $(A_1,\dots A_{N_A})$ from $F_A$ and a
buy order sample $(B_1,\dots,B_{N_B})$ from $F_B$, the corresponding
\emph{clearing price} $X$ is defined by 
\[
  X=\inf\{x \in \RR : \DA(x) \geq \DB(x)\}.
\]
\end{definition}
\begin{remark}\label{rem:eqpricedef}
It should be noted that there are issues of existence and uniqueness
of solutions to (\ref{eq:eq_equation}). Firstly, when the bid- and
ask-samples are such that,
\[
  B_1 \leq \dots \leq B_{N_B} < A_1 \leq \cdots \leq A_{N_A}
\]
there is \emph{no solution} where $\DA$ and $\DB$ intersect. Secondly,
it is possible that there is an interval $[\ubar{X},\bar{X}]$ of
possible clearing prices for which $\DA=\DB$, ruining uniqueness.
Both issues are addressed in definition~\ref{def:eqprice}, much in
the same way quantiles of a distribution are defined
(see figure \ref{fig:da_db_simple} for an illustration).
\begin{figure}
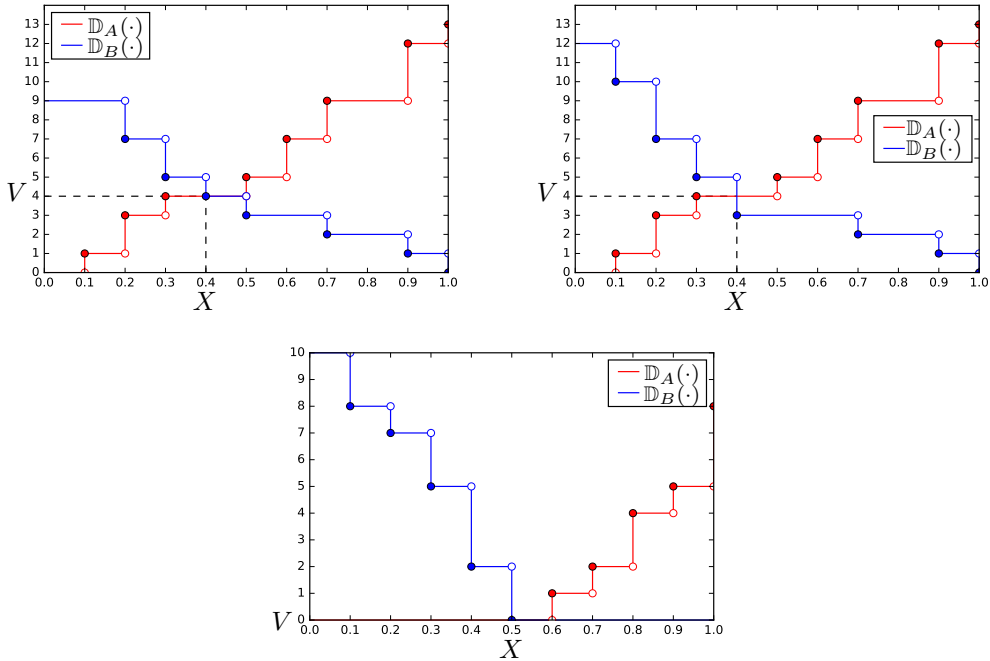

  \begin{center}
    \begin{lpic}{da_db_simple_v2(0.45)}
      \lbl[t]{40,88;{\scriptsize{$\DA(\cdot)$}}}
      \lbl[t]{40,82;{\scriptsize{$\DB(\cdot)$}}}      
      \lbl[t]{66,7;{\small{$X$}}}
      \lbl[t]{11,39;{\small{$V$}}}
    \end{lpic}
    \begin{lpic}{da_db_cross_v2(0.45)}
      \lbl[t]{124,58;{\scriptsize{$\DA(\cdot)$}}}
      \lbl[t]{124,52;{\scriptsize{$\DB(\cdot)$}}}   
      \lbl[t]{66,7;{\small{$X$}}}
      \lbl[t]{11,39;{\small{$V$}}}
    \end{lpic}
    \begin{lpic}{da_db_disjoint_v2(0.45)}
      \lbl[t]{125,88;{\scriptsize{$\DA(\cdot)$}}}
      \lbl[t]{125,82;{\scriptsize{$\DB(\cdot)$}}}     
      \lbl[t]{78,7;{\small{$X$}}}
      \lbl[t]{11,15;{\small{$V$}}}
    \end{lpic}
    \caption{
      \label{fig:da_db_simple}
      Three possible examples of the supply curve $\DA(\cdot)$ (the increasing
      (red) step function) and the demand curve $\DB(\cdot)$ (the decreasing
      (blue) step function).
      Left upper panel: a situation in which there is no unique point of
      intersection, note the position of $X$ at the left of the interval where
      $\DA=\DB$. Right upper panel: a situation in which there is a unique
      intersection point, but $\DA(X)> \DB(X)$. Lower panel: A situation in
      which no transactions are possible, note the position of $X$ at the
      highest placed buy order. Note also the position of the transacted volume ($V$) after clearing (this quantity is defined later on).}
  \end{center}
\end{figure}
\end{remark}
In subsequent subsections, closed-form expressions are provided for the
probability distributions (conditional, given $(N_A,N_B)$) of several
important market quantities, like clearing price $X$ and transacted
volume $V$. 

While this stochastic model of price formation is based on the
mechanism of a call auction, the clearing price also
has an interpretation for continuous trading. To appreciate the
relation, the process of continuous bidding and transacting
(with matching of orders as an instantaneous but momentary form
of clearing) should be viewed in an aggregated form over an
interval of time $I$. During any such interval the numbers of buyers
and sellers must still be equal, and that is exactly what equation
(\ref{eq:eq_equation}) expresses. Then, at the aggregate level, the
detailed, step-by-step process of trading during the interval
may be modelled equivalently (or in close approximation) as market
clearing at a clearing price $X$ associated with the interval $I$.

For both the auction and the continuous trading interpretations, the
following applies: if $F_A$, $F_B$ and the distribution of $(N_A,N_B)$
are chosen in an appropriate way, the clearing price $X$ can be interpreted
as a true, underlying price for the asset, associated not with any
specific point in time but with the whole interval $I$ (to relate such
an \emph{interval-price} to timed market prices, one may think of $X$
loosely as the price \emph{at a time $T$ randomly sampled from $I$}).
To justify the fixed distributions $F_A$, $F_B$ and the independence
assumptions on the order samples, $I$ must not be too long due to
possible non-stationarity but long enough statistically, aggregating a
sufficiently large numbers of orders. Furthermore, the stochastic
behaviour of $\FA$ and $\FB$ (that is, the randomness these quantities
represent) must reflect the uncertainty in the incoming orders on
the respective sides of the market during the time interval $I$ with
some accuracy. Similarly the distribution chosen for liquidity
$(N_A,N_B)$ must reflect the uncertainty in actual market liquidity
conditions during the interval $I$. If these conditions are met, the
model will provide an accurate reflection of the stochastic aspects of
market clearing, and thereby, of price formation.

In the setting of continuous trading, it makes sense to measure time
in terms of market events rather than physical time, in particular
regarding the interval $I$. Combining with the interpretation of $X$
as a true, underlying price for the interval $I$, we can fix $N=N_A+N_B$
and interpret the resulting clearing price $X$ as a true, underlying
asset price associated with the interval spanned by the next $N$ orders.

The unrestricted freedom in the choices for $F_A,F_B$ and the distribution
of $(N_A,N_B)$ enables use of empirical fits for these distributions from
previous intervals. It is also possible to make definite, default choices
for these quantities: for instance, choosing independent Poisson
distributions for $N_A$ and $N_B$ would correspond to the assumption of
Poisson order flow, which is omnipresent in the literature (see, among
many others, \cite{Smithetal03,Contetal10,Abergeljedidi13,Contdelarrard13, Toke15_2}
for examples in context of continuous double auctions and
\cite{Mendelson82,Toke15} for examples in the standard call auction).
In subsection \ref{sec:liqdist} we consider further possible choices for
the distribution of $(N_A,N_B)$ and the model properties implied.


\subsection{Distribution of clearing price and volume}
In this subsection we derive the probability distributions of
price and price-volume, resulting from the equilibrium
equation~(\ref{eq:eq_equation}), without and with a limit
order book. We concentrate on the marginal distribution of the clearing
price $X$ only first, given by the following theorem (proved in the appendix).
\begin{restatable}[Clearing price distribution]{theorem}{thmcpdist}
\label{thm:cpdist}
The distribution of the clearing price $X$, conditional on $N_A$ and $N_B$,
is given by,
\[
  \begin{split}
  \PP(X &\leq x | N_A, N_B)\\
    &=\sum_{k=0}^{N_A} \sum_{l=0}^{N_B \wedge k} \binom{N_A}{k}
    F_A(x)^k(1-F_A(x))^{N_A-k} \binom{N_B}{l} (1-F_B(x))^l F_B(x)^{N_B-l}.
  \end{split}
\]
\end{restatable}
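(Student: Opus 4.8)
The plan is to reduce the event $\{X\leq x\}$, for fixed $x$, to a comparison between two independent binomial random variables. First I would record the structural fact that underlies everything: since $\DA$ is non-decreasing and right-continuous (it is $N_A$ times an empirical distribution function) and $\DB(\cdot)=N_B(1-\FB(\cdot))$ is non-increasing and right-continuous, the difference $g:=\DA-\DB$ is non-decreasing and right-continuous, so the set $S=\{y\in\RR: g(y)\geq 0\}$ is an up-set. If $\DA(x)\geq\DB(x)$ then $x\in S$ and hence $X=\inf S\leq x$. For the converse, a short argument using right-continuity shows that $\inf S\in S$ (were $g$ strictly negative at $\inf S$, right-continuity would force $g<0$ on a right-neighbourhood of $\inf S$, contradicting the definition of the infimum), so $g(X)\geq 0$; monotonicity of $g$ then gives $g(x)\geq g(X)\geq 0$ for every $x\geq X$. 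This establishes
\[
  \{X\leq x\}=\{\DA(x)\geq\DB(x)\}.
\]

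Next I would identify the distributions of the two sides of this inequality. Conditionally on $(N_A,N_B)$, the quantity $\DA(x)=\sum_{i=1}^{N_A}1_{\{A_i\leq x\}}$ is a sum of $N_A$ \iid\ Bernoulli variables with success probability $F_A(x)$, hence $\DA(x)\sim\Bin(N_A,F_A(x))$; likewise $\DB(x)=\sum_{j=1}^{N_B}1_{\{B_j>x\}}\sim\Bin(N_B,1-F_B(x))$. Because the bid- and ask-samples are independent, these two variables are independent given $(N_A,N_B)$.

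Finally, writing $K=\DA(x)$ and $L=\DB(x)$, I would compute
\[
  \PP(X\leq x\mid N_A,N_B)=\PP(K\geq L\mid N_A,N_B)
    =\sum_{k=0}^{N_A}\sum_{l=0}^{N_B\wedge k}\PP(K=k\mid N_A)\,\PP(L=l\mid N_B),
\]
where the upper limit $N_B\wedge k$ of the inner sum encodes both $L\leq N_B$ almost surely and the requirement $L\leq K$. Substituting the two binomial mass functions yields exactly the stated formula.

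The one genuinely delicate point is the first step: the infimum in Definition~\ref{def:eqprice} need not a priori be attained, and $\{X\leq x\}$ must be handled through the monotonicity and right-continuity of $\DA-\DB$ rather than by looking for a solution of the clearing equation itself (which, as Remark~\ref{rem:eqpricedef} notes, may fail to exist or be non-unique). Once the equivalence $\{X\leq x\}=\{\DA(x)\geq\DB(x)\}$ is in hand, the remainder is a routine computation with independent binomials.
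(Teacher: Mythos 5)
Your proposal is correct and follows essentially the same route as the paper: the equivalence $\{X\leq x\}=\{\DA(x)\geq\DB(x)\}$ via monotonicity and right-continuity (your argument that the infimum is attained is just a rephrasing of the paper's Lemma~\ref{lem:technlemma}, which shows the set $\{y:\DA(y)\geq\DB(y)\}$ is a closed half-line), followed by the independent-binomials computation of Lemma~\ref{lem:bindist} and the double sum over $\{l\leq k\}$. No gaps worth noting.
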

However, it is also possible to derive the joint distribution
of clearing price and transacted volume, which is defined next.
\begin{definition}
\label{def:eq_volume}
The transacted volume $V$ corresponding to the clearing price $X$, is defined by $V=\DA(X)$. 
\end{definition}
\begin{remark}
\label{rem:Vambiguity}
The quantity $V=\DA(X)$ should be interpreted as the maximal number of orders that can be matched in clearing. In the context of a call auction, it is the total volume that is transacted. If $F_A$ and$F_B$ are continuous distributions, there is almost surely a unique point where $\DA$ and $\DB$ intersect, hence $V=\DA(X)=\DB(X)$. In the case of a discrete price-axis it is possible that $\DA(X)>\DB(X)$, which means that the volume $\DA(X)$ is not completely matched (see the upper right panel of figure~\ref{fig:da_db_simple}). As a convention, we
neglect such discretization effects and continue with
definition~\ref{def:eq_volume} (compare with the resolution to the
ambiguity for $X$, as an infimum, see remark~\ref{rem:eqpricedef}).
\end{remark}
In the next theorem (proved in the appendix), an explicit expression for
the joint distribution of $X$ and $V$ is provided. It is assumed that
the price-axis $\calX$ is a discrete set,
$\calX :=\{x_0, x_0 + \delta, \dots\}$, where $\delta$ is the \emph{ticksize}. 
\begin{restatable}[Joint clearing price/transacted volume distribution]%
{theorem}{thmxv}
\label{thm:xvdist} 
The joint distribution of clearing price $X$ and transacted volume $V$,
conditional on $N_A$ and $N_B$, is given by,
\begin{equation}
\label{eq: result_xvdist}
\begin{split}
  \PP(&X\leq x, V \leq v|N_A,N_B)\phantom{\sum_{u=0}^v}\\
  & = \sum_{u=0}^v \sum_{k=0} ^u \sum_{l=0}^k
    \biggl[ \binom{N_B}{l}\binom{N_A}{k,u-k,N_A-u}(1-F_B(x))^l F_B(x)^{N_B-l}\\
  &\qquad \qquad\times F_A(x)^k(F_A(x+\delta)-F_A(x))^{u-k}(1-F_A(x+\delta))^{N_A-u}
    \biggr]\phantom{\sum_{u=0}^v}\\
  &\qquad + \sum_{y \in \mathcal{X}, y\leq x} \sum_{u=v+1}^{N_A}
    \sum_{k=0}^u \sum_{l=0}^k
    \biggl[ \binom{N_B}{l}\binom{N_A}{k,u-k,N_A-u}(1-F_B(y))^l F_B(y)^{N_B-l}\\
  &\qquad \qquad \times F_A(y)^k(F_A(y+\delta)-F_A(y))^{u-k}(1-F_A(y+\delta))^{N_A-u}
    \biggr]\phantom{\sum_{u=0}^v}\\
  &\qquad - \sum_{y \leq x} \sum_{l=0}^{N_B} \sum_{k= l \vee v+1}^{N_A}
    \binom{N_A}{k} F_A(y)^k (1-F_A(y))^{N_A-k}
    \binom{N_B}{l} F_B(y)^{N_B-l} (1-F_B(y))^l.
  \end{split}
\end{equation}
\end{restatable}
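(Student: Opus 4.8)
The plan is to follow the template of the proof of Theorem~\ref{thm:cpdist}: reduce the events $\{X\le y\}$ and $\{V\le v\}$ to statements about the step functions $\DA,\DB$, decompose $\{X\le x,\,V\le v\}$ into pieces on which the relevant order counts have an explicit multinomial law, and then exploit independence of the ask- and bid-samples. Throughout, $N_A$ and $N_B$ are held fixed.

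Two preliminary observations do most of the structural work. First, $\DA-\DB$ is non-decreasing, right-continuous, integer-valued and changes value only at the grid points $\{A_i\}\cup\{B_j\}\subseteq\calX$, so $\{y:\DA(y)\ge\DB(y)\}=[X,\infty)$ with $X\in\calX$; hence, exactly as in the proof of Theorem~\ref{thm:cpdist}, $\{X\le y\}=\{\DA(y)\ge\DB(y)\}$ for every $y$. Second, $\DA$ is non-decreasing with jumps only on $\calX$, so $\{y:\DA(y)>v\}=[\rho_v,\infty)$ for a grid point $\rho_v$ (with $\rho_v=\infty$ when $v\ge N_A$); consequently $\{V\le v\}=\{\DA(X)\le v\}=\{X<\rho_v\}$, and $y=\rho_v-\delta$ is the \emph{unique} grid point with $\DA(y)\le v<\DA(y+\delta)$, whenever such a point exists.

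The argument then rests on the disjoint decomposition
\[
  \{X\le x,\,V\le v\}
   =\{X\le x,\,\DA(x+\delta)\le v\}
    \ \sqcup\ \{X\le x,\,V\le v,\,\DA(x+\delta)>v\},
\]
which is valid because $X\le x$ forces $V=\DA(X)\le\DA(x+\delta)$, so the first event already lies in $\{V\le v\}$. For the first event I would invoke the joint law of $(\DB(x),\DA(x),\DA(x+\delta))$: under independence of the samples, $\DB(x)\sim\Bin(N_B,1-F_B(x))$ is independent of the triple of counts $(\DA(x),\ \DA(x+\delta)-\DA(x),\ N_A-\DA(x+\delta))$, which is multinomial with cell probabilities $F_A(x)$, $F_A(x+\delta)-F_A(x)$, $1-F_A(x+\delta)$; imposing $\DB(x)\le\DA(x)$ (i.e. $X\le x$) and $\DA(x+\delta)\le v$ and summing gives precisely the first triple sum in \eqref{eq: result_xvdist}. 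For the second event, the preliminary observations let me write it as the disjoint union over grid points $y\le x$ of $\{X\le y,\ \DA(y)\le v<\DA(y+\delta)\}$ — the ``crossing point'' $y=\rho_v-\delta$ being unique, and the condition $\DA(x+\delta)>v$ being exactly $\rho_v-\delta\le x$. Using $\PP(X\le y,\ \DA(y)\le v<\DA(y+\delta))=\PP(X\le y,\DA(y+\delta)>v)-\PP(X\le y,\DA(y)>v)$ and evaluating each term by the same joint-law-plus-independence computation — the first via $(\DB(y),\DA(y),\DA(y+\delta))$, the second via $(\DB(y),\DA(y))$ — yields, after summing over $y\le x$, exactly the second and third sums in \eqref{eq: result_xvdist}.

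Most of this is routine bookkeeping with binomial and multinomial probabilities and poses no difficulty. The main point requiring care is the reduction of the combined constraint $\{V\le v,\ \DA(x+\delta)>v\}$ to the telescoping family $\{X\le y,\ \DA(y)\le v<\DA(y+\delta)\}$ over grid points $y\le x$: one must keep careful track of the discrete price grid — that $X$ and $\rho_v$ are grid points, that $\DA$ crosses level $v$ at a single grid point, and that $\DA(x+\delta)>v$ is equivalent to that crossing point lying at or below $x$ — and this is also where the appearance of $x+\delta$ (rather than $x$) in the first sum of \eqref{eq: result_xvdist} originates. Everything else reduces, as in Theorem~\ref{thm:cpdist}, to the independence of the ask- and bid-samples and the explicit multinomial distribution of the order counts.
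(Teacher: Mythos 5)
Your proposal is correct and follows essentially the same route as the paper's proof: your crossing point $\rho_v$ is exactly the generalized inverse $\DA^{-1}(v+1)$ used there, your decomposition of $\{X\le x, V\le v\}$ according to $\DA(x+\delta)\le v$ versus $\DA(x+\delta)>v$ matches the paper's split via $\min(x,\DA^{-1}(v+1)-\delta)$, and the telescoping difference over grid points $y\le x$ together with the independent binomial/trinomial laws of $(\DB(y),\DA(y),\DA(y+\delta))$ is the same computation that yields the three sums in \eqref{eq: result_xvdist}.
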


\subsection{Excess liquidity}

There are several variations possible on the definition of the clearing
price $X$ as given above: to start with, during continuous trading,
exchanges often offer an open limit order book, which contains all
visible limit orders on ask-side and bid-side. Denote by $L_A(x)$ the
total volume on the ask-side of the limit order book for a
price less than or equal to $x$. Similarly, denote by $L_B(x)$ the
total volume on the bid-side of the limit order book
for a price above $x$. Then definition~\ref{def:eqprice} of
the clearing price $X$ is adapted to,
\[
  X=\inf\{x \in \RR : \DA(x)+ L_A(x) \geq \DB(x)+L_B(x)\},
\]
corresponding to an adapted market clearing equation that takes the limit order book into account:
\begin{equation}
\label{eq:limitbook_eq_equation}
  N_A \FA(X) + L_A(X) = N_B\bigl(1-\FB(X)\bigr) + L_B(X), 
\end{equation}
Note that $x\mapsto L_A(x)$ and $x\mapsto L_B(x)$ are non-stochastic
quantities and that for any $x$, either $L_A(x)$ or $L_B(x)$ is equal
to zero (as, otherwise, the book could be cleared further by matching the
overlapping orders).

To generalize, we include \emph{excess liquidity} as any sort of liquidity
that plays a role in the clearing process, but does not originate from the
quoting process as described by $\FA$ and $\FB$. As such, we view excess
liquidity as an external influence.
\begin{definition}
\label{def:refprice_el}
If the clearing price $X$ is defined by the equation,
\begin{equation}
  \label{eq:refprice_el}
  N_A \FA(X) = N_B\bigl(1-\FB(X)\bigr) + \Delta(X),
\end{equation}
where $\Delta: \mathcal{X} \to \ZZ$ is a right-continuous,
non-increasing function, then $\Delta$ is called the \emph{excess liquidity}. 
\end{definition}
Excess liquidity takes the market out of the `pure' equilibrium given by
$\DA(X) = \DB(X)$. For example, inclusion of the limit order book is
possible through $\Delta(x)=L_B(x) - L_A(x)$. Positive values of
$\Delta(x)$ correspond to an \emph{excess demand} and negative values
of $\Delta(x)$ mean an \emph{excess supply}. Another example of excess
liquidity is the arrival of a market order. A sell market order
of size $\omega \in \NN$ corresponds to the constant function
$\Delta = -\omega 1_\calX$, while a buy market order is described by
the function $\Delta = \omega 1_\calX$. Similarly, a buy limit
order with limit price $b$ can be described by $\Delta = \omega 1_{[x_0,b]}$
and a sell limit order with limit price $a$ by
$\Delta = -\omega 1_{[a,\infty)}$. 

Lemma \ref{lem:technlemma} can be re-derived with excess liquidity, in
order to obtain the equivalence
$X \leq x \Leftrightarrow \DA(x) \geq \DB(x) + \Delta(x)$.
Exactly like in the proof of theorem \ref{thm:cpdist}, this leads to the distribution of the clearing price, conditional on $N_A$ and $N_B$, as stated in the next proposition.
\begin{proposition} [Clearing price distribution in case of excess liquidity] 
\label{prop:eqprice_el_dist}
When excess liquidity $x\mapsto\Delta(x)$ plays a role, the clearing price
distribution conditional on $N_A,N_B$, is given by 
\[
  \begin{split}
  \PP(X \leq x &| N_A, N_B) \\
    &= \sum_{k=0}^{N_A} \sum_{l=0}^{U(k,x)}
      \binom{N_A}{k} F_A(x)^k(1-F_A(x))^{N_A-k}
      \binom{N_B}{l} (1-F_B(x))^l F_B(x)^{N_B-l},
  \end{split}
\]
where $U(k,x) = (k -\Delta(x)) \wedge N_{B}$.
\end{proposition}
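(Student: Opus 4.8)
The plan is to imitate the proof of Theorem~\ref{thm:cpdist}, the one new ingredient being that the technical lemma underlying it (Lemma~\ref{lem:technlemma}) must be re-established with excess liquidity present, so as to obtain the pointwise equivalence $\{X\le x\}=\{\DA(x)\ge\DB(x)+\Delta(x)\}$. To see this, set $g(x):=\DA(x)-\DB(x)-\Delta(x)$. Since $\DA(x)=N_A\FA(x)$ is non-decreasing and right-continuous and $\DB(x)=N_B(1-\FB(x))$ is non-increasing and right-continuous, while $\Delta$ is non-increasing and right-continuous by the hypothesis of Definition~\ref{def:refprice_el}, the integer-valued function $g$ is itself non-decreasing and right-continuous. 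Arguing exactly as in Lemma~\ref{lem:technlemma}: monotonicity shows that every $x$ strictly to the right of $X:=\inf\{y:g(y)\ge 0\}$ satisfies $g(x)\ge 0$, and right-continuity then gives $g(X)\ge 0$ as well, so $\{x:g(x)\ge 0\}=\{x:x\ge X\}$ and hence $X\le x\iff\DA(x)\ge\DB(x)+\Delta(x)$ for every $x\in\calX$ (with the convention $\inf\emptyset=+\infty$ the equivalence also holds, trivially, on the realisations for which $g(x)<0$ for all $x$).

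Granted this equivalence, the remainder is the same tail-probability computation as in Theorem~\ref{thm:cpdist}. Conditionally on $N_A,N_B$, the variables $K:=\DA(x)$ and $L:=\DB(x)$ are independent with $K\sim\Bin(N_A,F_A(x))$ and $L\sim\Bin(N_B,1-F_B(x))$, because the bid- and ask-samples are independent \iid\ samples. Therefore
\[
  \PP(X\le x\mid N_A,N_B)=\PP\bigl(K\ge L+\Delta(x)\bigr)
   =\sum_{k=0}^{N_A}\PP(K=k)\,\PP\bigl(L\le k-\Delta(x)\bigr),
\]
and since $L\in\{0,\dots,N_B\}$ the inner factor equals $\sum_{l=0}^{U(k,x)}\PP(L=l)$ with $U(k,x)=(k-\Delta(x))\wedge N_B$, the sum being empty (hence $0$) when $k-\Delta(x)<0$. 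Substituting the binomial mass functions $\PP(K=k)=\binom{N_A}{k}F_A(x)^k(1-F_A(x))^{N_A-k}$ and $\PP(L=l)=\binom{N_B}{l}(1-F_B(x))^lF_B(x)^{N_B-l}$ produces exactly the displayed formula.

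The only genuinely delicate step is the first paragraph: one has to verify that adjoining the non-increasing, right-continuous function $\Delta$ leaves the net supply curve $g$ monotone and right-continuous, so that the infimum in Definition~\ref{def:refprice_el} is still characterised by the single inequality $g(x)\ge 0$ at the point $x$ — which is precisely why those two properties are built into the definition of excess liquidity. Everything after that is the routine bookkeeping for the tail of a difference of two independent binomials, which I would not write out in full.
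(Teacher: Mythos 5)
Your proposal is correct and follows exactly the route the paper takes: it re-derives Lemma~\ref{lem:technlemma} with the excess-liquidity term (using that $\Delta$ is non-increasing and right-continuous, so the net curve stays monotone and right-continuous) and then repeats the binomial tail computation of Theorem~\ref{thm:cpdist} with the adjusted summation bound $U(k,x)=(k-\Delta(x))\wedge N_B$. In fact you spell out the details that the paper only sketches, including the empty-sum convention when $k-\Delta(x)<0$.
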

Note that the limit order book makes an appearance \emph{only} in the summation
bound, leaving the binomial character of the equilibrium distribution intact.

\subsection{Order flow distributions}
\label{sec:liqdist}
All the distributions derived in the previous subsections, are conditional on
the pair $(N_A,N_B)$. In this subsection we discuss some possibilities
for the distribution of $(N_A,N_B)$ (the so-called \emph{order flow
distribution}) and their consequences for clearing price distributions.
The common assumption in the (early) literature is what is called \emph{Poisson order flow}:
for continuous double auctions  \cite{Smithetal03,Contetal10,Abergeljedidi13,Contdelarrard13,Toke15_2} and call auctions \cite{Mendelson82, Toke15},
Poisson order flow follows from assumed independent Poisson processes
for the arrival of buy and sell orders. Here, we would take,
\[
  (N_A,N_B) \sim \text{Pois}(\mu_A T) \times \text{Pois}(\mu_B T),
\]
for Poisson rates $\mu_A$, $\mu_B$ and a given interval duration $T$
to achieve the same.

However, in this setting it makes sense to consider more general
models for order flow. Assume again that we consider an interval in which
$N$ new orders arrive. Fix $N_A+N_B=:N \in \NN$ and leave the distribution
of $N_A$ open for choice. A reasonable choice would be to choose $N_A$
according to \emph{binomial order flow}, \ie
\[
  N_A \sim \Bin(N,p),
\]
for some $p \in (0,1)$ representing order flow imbalance. Taking,
\[
  (N_A,N_B) \sim  \text{Pois}(\mu_A) \times \text{Pois}(\mu_B),
\]
is equivalent to,
\[
  N = N_A +N_B \sim \text{Pois}(\mu_A+\mu_B), \quad N_A |N \sim \Bin(N,p),
\] 
for $p=\mu_A/(\mu_A+\mu_B)$.

Both Poisson and binomial proposals express
the conviction that \emph{order flow imbalance} $\alpha:=N_A/N$ does not
display great stochastic fluctuation and lies close to its expectation $p$,
especially for greater values of $N$ due to the central limit theorem.
This makes it difficult to capture market phenomena that are due to fat
tails in the order flow distribution, to describe more extreme, yet common
market conditions. Hence our third proposal: we consider \emph{beta order flow imbalance},
\[
  N_A = \alpha N, \quad \alpha \sim \text{Beta}(\beta_1,\beta_2).
\]
Choice of the parameters $\beta_1,~\beta_2$ permits great modelling
freedom. For instance, if we expect the order flow on the bid- and ask-side
to be roughly balanced, it is appropriate to set $\beta_1=\beta_2$. If we
expect the market to be out of balance (\eg\ while trending), we may
choose $\beta_1>\beta_2$ when we expect more supply than demand, and
vice versa. Perhaps most interesting is the scale of the betas: if
$\beta_1,\beta_2<1$ we induce the fat tails not seen in Poisson or
binomial order flow, while $\beta_1,\beta_2\gg 1$ will lower the variance
and bring $\alpha$ close to its expectation $\beta_1/(\beta_1+\beta_2)$.

To shed more light on the influence of the order flow distribution on
the clearing price distribution, we consider a simple example. To focus
on order flow, we make the trivial choices for the other parameters:
$F_A(\cdot)=F_B(\cdot)=\Phi_{\mu,\sigma}(\cdot)$ for
$\mu =10$ and $\sigma =0.1$. To appreciate the effects of order flow on clearing price distributions, consider figure~\ref{fig:role_liqdist_balanced}, the probability
density $f_X$ of the clearing price is plotted for various balanced (left panel) and unbalanced (right panel) choices
of the order flow distribution. As expected, $f_X$ centers around $10$ in
all balanced cases and around a lower location for the unbalanced cases. It is seen that
Poisson order flow leaves little room for variation in the values of $N_A$
and $N_B$, causing the density to peak relatively sharply. By contrast, beta
order flow imbalance leads to more liquidity-driven uncertainty in the clearing
price.
\begin{figure}[h]
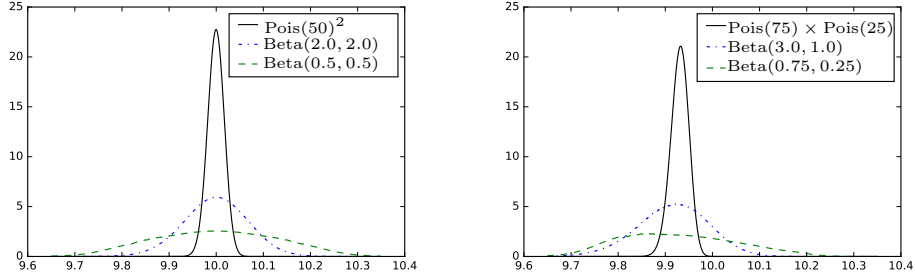

  \centering
    \parbox{13.5cm}{
    \centering
    \begin{lpic}{role_liqdist_balanced_v2(0.42)}
      \lbl[t]{106,88;{\tiny{$\text{Pois}(50)^2$}}}
      \lbl[t]{111,82;{\tiny{$\text{Beta}(2.0,2.0)$}}}
      \lbl[t]{111,76;{\tiny{$\text{Beta}(0.5,0.5)$}}}
    \end{lpic}
      \begin{lpic}{role_liqdist_unbalanced_v2(0.42)}
      \lbl[t]{109,87;{\tiny{$\text{Pois}(75)\times\text{Pois}(25)$}}}
      \lbl[t]{101,81;{\tiny{$\text{Beta}(3.0,1.0)$}}}
      \lbl[t]{104,75;{\tiny{$\text{Beta}(0.75,0.25)$}}}
    \end{lpic}
    \caption{\label{fig:role_liqdist_balanced} Clearing
    price density $f_X$, for $F_A=F_B=\Phi_{10,0.1}$ and various
    balanced (left panel) and unbalanced (right panel) choices for the order flow distribution. The solid black line
    is the density for Poisson order flow, while the dashed blue line
    and the dashed green line correspond to beta order flow imbalance, for fixed $N=100$.}}
\end{figure}



\section{The high-liquidity limit}
\label{sec:asymp}

In this section, we provide the asymptotic clearing price distribution in limit
of infinite liquidity. To be more precise, denote $N=N_A+N_B$, let 
$N_A = \alpha N$, $N_B=(1-\alpha)N$ for some constant $0<\alpha <1$
we refer to as \emph{order flow imbalance} and consider the
limit $N \to \infty$.
We take a continuous price-axis $\calX = [x_0,\infty)$ and assume that the
distribution functions $F_A$ and $F_B$ are strictly increasing,
describing measures that are absolutely continuous with respect
to the Lebesgue measure, with densities denoted $f_A$ and $f_B$.
Let $X$ denote a solution to $N_A \FA(X) = N_B(1-\FB(X))+\Delta(X)$, with
possibly non-zero excess liquidity $\Delta$. Denote by $x_E$ the
\emph{real equilibrium price} which is the (non-random) price uniquely
defined by the equilibrium equation,
\begin{align}
\label{eq:real_eq_price}
   \alpha F_A(x_E) = (1-\alpha)(1-F_B(x_E)).
\end{align}
According to the following theorem (the proof of which can be found in the
appendix), the clearing price $X$ is in the limit
distributed according to a normal distribution centred on $x_E$ with variance
that depends on $f_A$ and $f_B$.
\begin{restatable}[High-liquidity clearing price distribution]{theorem}{highliqlimit}
\label{thm:highliq}
Let $X$ be the clearing price in case of possible
excess liquidity $\Delta$. Assume that $F_A$ and $F_B$ are
strictly increasing and absolutely continuous with respect to the
Lebesgue measure with densities $f_A$ and $f_B$. Additionally, assume
that excess liquidity scales with $N$ as $\Delta(\cdot) = \sqrt{N} D(\cdot)$,
for some continuous and bounded function $D : \calX \to \RR$. Then,
as $N \to \infty$,
\begin{equation}
  \label{eq:highliq}
  \sqrt{N}(X-x_E) \conv{w.} N(\mu(x_E),\sigma^2(x_E)),
\end{equation}
where the asymptotic mean and standard deviation are given by,
\[
  \mu(x_E)=\frac{D(x_E)}{\alpha f_A(x_E)+(1-\alpha)f_B(x_E)},
  \quad
  \sigma(x_E) = 
  \frac{\tau(x_E)}{\alpha f_A(x_E) + (1-\alpha)f_B(x_E)},
\]
for
\[
  \tau^2(x_E) = \alpha F_A(x_E)\bigl(1-F_A(x_E)\bigr)
  + (1-\alpha) F_B(x_E)\bigl(1-F_B(x_E)\bigr),
\]
and $x_E$ is the real equilibrium price.
\end{restatable}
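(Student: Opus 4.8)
The plan is to reduce the statement to a central limit theorem for a difference of two independent binomial counts. Write $K(x)=\#\{i:A_i\le x\}$ and $L(x)=\#\{j:B_j>x\}$, so that $\DA(x)=K(x)$ and $\DB(x)=L(x)$, while $K(x)\sim\Bin(N_A,F_A(x))$ and $L(x)\sim\Bin(N_B,1-F_B(x))$ are independent, the ask- and bid-samples being independent. Since $x\mapsto\DA(x)-\DB(x)-\Delta(x)$ is non-decreasing and right-continuous, the equivalence $X\le x\Leftrightarrow\DA(x)\ge\DB(x)+\Delta(x)$ recalled before Proposition~\ref{prop:eqprice_el_dist} holds, yielding the exact identity
\[
  \PP(X\le x\mid N_A,N_B)=\PP\bigl(K(x)-L(x)\ge\Delta(x)\bigr),
\]
which is also what Proposition~\ref{prop:eqprice_el_dist} reduces to after carrying out the inner sum (using that $L(x)\le N_B$ always). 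It therefore suffices to evaluate the right-hand side along the sequence $x_N:=x_E+t/\sqrt N$, for fixed $t\in\RR$, and show the limit equals $\Phi((t-\mu(x_E))/\sigma(x_E))$; since $\PP(\sqrt N(X-x_E)\le t)=\PP(X\le x_N)$ and the proposed limit is continuous in $t$, this is exactly the weak convergence claimed in \eqref{eq:highliq}.

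Next I would compute the first two moments of $S_N:=K(x_N)-L(x_N)$. Using differentiability of $F_A$ and $F_B$ at $x_E$ — so that $F_A(x_N)=F_A(x_E)+f_A(x_E)\,t/\sqrt N+o(N^{-1/2})$ and likewise for $F_B$ — together with the defining relation $\alpha F_A(x_E)=(1-\alpha)(1-F_B(x_E))$, the $O(N)$ bulk terms of $\mathbb{E}S_N=N_AF_A(x_N)-N_B(1-F_B(x_N))$ cancel, leaving
\[
  \mathbb{E}S_N=\sqrt N\,t\,\bigl(\alpha f_A(x_E)+(1-\alpha)f_B(x_E)\bigr)+o(\sqrt N),\qquad
  \var(S_N)=N\,\tau^2(x_E)+o(N).
\]
Here $\tau^2(x_E)>0$, because the equilibrium equation together with the stochastic ordering \eqref{eq:stochorder} forces $F_A(x_E),F_B(x_E)\in(0,1)$. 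Since $S_N-\mathbb{E}S_N$ is a sum of $N$ independent, centred random variables bounded by $1$ with $\var(S_N)\to\infty$, the Lindeberg condition is satisfied trivially, and the Lindeberg--Feller theorem (a triangular-array version of de Moivre--Laplace, the Bernoulli parameters depending on $N$) gives $(S_N-\mathbb{E}S_N)/(\sqrt N\,\tau(x_E))\conv{w.}N(0,1)$.

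To conclude, put $c:=\alpha f_A(x_E)+(1-\alpha)f_B(x_E)$ and use $\Delta(x_N)=\sqrt N\,D(x_N)$ with $D$ continuous and bounded:
\[
  \PP(X\le x_N)=\PP\!\left(\frac{S_N-\mathbb{E}S_N}{\sqrt N\,\tau(x_E)}\ge\frac{\Delta(x_N)-\mathbb{E}S_N}{\sqrt N\,\tau(x_E)}\right),\qquad
  \frac{\Delta(x_N)-\mathbb{E}S_N}{\sqrt N\,\tau(x_E)}\longrightarrow\frac{D(x_E)-tc}{\tau(x_E)}.
\]
The limit on the right is a continuity point of the standard normal law, so the probability tends to $1-\Phi((D(x_E)-tc)/\tau(x_E))=\Phi((tc-D(x_E))/\tau(x_E))=\Phi((t-\mu(x_E))/\sigma(x_E))$, which is \eqref{eq:highliq}.

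The one genuinely delicate point — and the step I expect to need the most care — is the first-order expansion of $F_A$ and $F_B$ at $x_E$: absolute continuity alone provides it only at Lebesgue points of the densities, and the statement implicitly requires $x_E$ to be such a point (its very formula features $f_A(x_E)$ and $f_B(x_E)$), e.g.\ under continuity of $f_A$ and $f_B$; keeping the $o(\sqrt N)$ remainders in $\mathbb{E}S_N$ under control is then routine. An essentially equivalent alternative is a Z-estimator argument: dividing \eqref{eq:refprice_el} by $N$ exhibits $X$ as the zero of $G_N(x):=\alpha\FA(x)-(1-\alpha)(1-\FB(x))-N^{-1/2}D(x)$; consistency $X\to x_E$ follows from Glivenko--Cantelli and strict monotonicity of the deterministic limit $\alpha F_A-(1-\alpha)(1-F_B)$; linearising $G_N$ at $x_E$ and replacing $\sqrt N(\FA(X)-F_A(X))$ by $\sqrt N(\FA(x_E)-F_A(x_E))$ up to $o_P(1)$ via empirical-process equicontinuity, the joint CLT for the independent pair $\sqrt N(\FA(x_E)-F_A(x_E))$, $\sqrt N(\FB(x_E)-F_B(x_E))$ reproduces the same mean and variance.
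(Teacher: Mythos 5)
Your argument is correct, but it proceeds along a genuinely different route from the paper. The paper works at the level of the equation itself: it uses that the equilibrium relation holds (a.s.) exactly at the random point $X$, introduces the empirical processes $\sqrt{N_A}(\FA-F_A)$ and $\sqrt{N_B}(\FB-F_B)$, invokes Donsker's theorem uniformly in $x$ so that the random argument $X$ can be inserted, deduces consistency $X\to x_E$ in probability, and finishes with Slutsky and the delta method applied to $F_R^{-1}$ where $F_R=\alpha F_A+(1-\alpha)F_B$. You instead invert the problem through the crossing equivalence $X\le x\Leftrightarrow \DA(x)\ge\DB(x)+\Delta(x)$, so that the distribution function of $\sqrt N(X-x_E)$ is evaluated only at \emph{deterministic} moving points $x_N=x_E+t/\sqrt N$, where $(\DA(x_N),\DB(x_N))$ is just an independent pair of binomials; a Lindeberg--Feller triangular-array CLT plus the cancellation of the $O(N)$ terms via the equilibrium equation then gives the limit $\Phi((t-\mu(x_E))/\sigma(x_E))$ directly, and your algebra identifying $\Phi((tc-D(x_E))/\tau(x_E))$ with this expression is right, as is the argument that strict monotonicity forces $F_A(x_E)\in(0,1)$ and hence $\tau^2(x_E)>0$. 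What your route buys is elementarity and robustness: no empirical-process theory, no delta method, and no need to handle the evaluation of an empirical process at a random point (a step the paper disposes of by appealing to uniformity of the Donsker limit) nor the exact a.s. equality at $X$ — only the monotone-crossing characterization of $X$. What the paper's functional route buys is the consistency statement $X\to x_E$ as an intermediate product and a formulation that extends naturally to process-level refinements. The delicate point you flag — that the expansion $F_A(x_N)=F_A(x_E)+f_A(x_E)t/\sqrt N+o(N^{-1/2})$ needs differentiability of $F_A,F_B$ at $x_E$ (e.g.\ continuity of the densities there), which absolute continuity alone does not supply — is real, but it is not specific to your argument: the paper's own delta-method step, requiring $(F_R^{-1})'(F_R(x_E))=1/f_R(x_E)$, rests on exactly the same implicit pointwise regularity at $x_E$, so your treatment is on the same footing as the original.
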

Consider a standard call auction in which the number of orders collected
is very large. The clearing price distribution is then closely concentrated
around $x_E$ and has
a width proportional to $1/\sqrt{N}$. So the model confirms
the intuition that large auctions lead to accurate price discovery and adds
that this accuracy is inversely proportional to the square root of the number
of orders. Non-zero excess liquidity of order $\sqrt{N}$ biases $X$ away
from $x_E$, however, this bias is also proportional to $1/\sqrt{N}$. So the
model says that in highly liquid auctions or markets, external influence in
the form of excess liquidity $\Delta$ must be of order larger than $\sqrt{N}$
to force (the distribution of) the clearing price away from the real
equilibrium price $x_E$. 
Furthermore,
the shift caused by the excess liquidity is inversely proportional
to a convex combination of $f_A$ and $f_B$, hence price impact will be
larger if the density of orders around the equilibrium price is low.

Next consider the case of continuous trading of a stock in an interval,
during which supply and demand are described by the distributions $F_A$
and $F_B$, and by order flow imbalance $\alpha \in (0,1)$. Assume that the
number of incoming orders during the interval
is very large, so that the limit of theorem~\ref{thm:highliq} forms a
good approximation for the clearing price distribution.
In the absence of excess liquidity, the distribution of the clearing price
associated with the interval is a sharply peaked normal distribution centred
at the real equilibrium price. If we repeat this argument for consecutive
intervals (possibly with changing $F_A$, $F_B$ and $\alpha$) and approximate
continuous trading by a periodically cleared market, the price process
becomes a discrete Brownian path (possibly trending if we add excess
liquidity). In many stochastic models for pricing, this type of stochastic
process is postulated; by contrast, here, the Brownian path emerges from
the central limit (in the form of Donsker's theorem, see the proof of
theorem~\ref{thm:highliq}) and the parameters of this Brownian path have
an interpretation in terms of supply, demand and order flow imbalance.

As argued after definition~\ref{def:refprice_el}, the model invites the
interpretation of the limit book as excess liquidity, in a
market made around the equilibrium price $x_E$ of many new orders. Think,
for instance, of a situation where new orders originate from liquidity
providers primarily; if the location of their equilibrium price
distribution undergoes a small but quick jump (for example because of
a sudden change in the price of a hedging index future), the result of theorem \ref{thm:highliq} suggests that the limit book
obstructs immediate market correction: if we consider a limit book
of order greater than $N^{1/2}$ over an interval of order $N$ orders,
the location of the clearing price distribution is expected to differ
from the liquidity providers' new $x_E$ on scales larger than $N^{-1/2}$.  To
re-phrase that slightly and more crudely, the model suggests that a limit book offering total
liquidity of order $L$ stabilizes the market price versus
fluctuations in valuation distributions or order flow imbalance, if those
fluctuations vary quickly enough to neutralize over a duration of
order $L^2$ (where time is measured in volume offered). \footnote{Note that these $\sqrt{N}$ scales originate from the central limit (more specifically, Donsker's theorem)  and that is it a topic of further research to verify these exact scales empirically.}

Finally, note that the variance in (\ref{eq:highliq}) is not only
dependent on $\tau^2(x_E)$ in the way one might expect, but like
the location in (\ref{eq:highliq}), it is inverse proportional
to a liquidity-weighted convex combination of $f_A$ and $f_B$,
evaluated at the real equilibrium price. So the volatility of the
Brownian path (as well as the influence of excess liquidity) goes
down in ranges where orders are concentrated and goes up in ranges
where orders are sparse. Consequently, the Brownian path has long
occupation times in ranges where orders are dense. 

\begin{remark}
\cite{Toke15} derives a normal asymptotic distribution of the
clearing price in a similar setting, under the assumption of Poisson order
flow (\ie\ $N \sim \text{Pois}(\lambda T)$, where $\lambda T \to \infty$,
$N_A \sim \Bin(N,\alpha)$) and $F_A=F_B=F$. The Poisson order flow with
$\lambda T \to \infty$ represents a fixed randomization of the deterministic
$N\to\infty$ discussed here. However in the proof he firstly considers
(in our notation) fixed $N$ and $N_A = \alpha N$ and finds an asymptotic
normal distribution for $X$, with mean $F^{-1}(1-\alpha)$ and standard
deviation $\sqrt{\alpha(1-\alpha)}f(F^{-1}(1-\alpha))^{-1}$.
Setting $F_A=F_B=F$ and $D=0$ in our result, the solution to
(\ref{eq:real_eq_price}) is $x_E = F^{-1}(1-\alpha)$, $\tau^2(x_E) = \alpha(1-\alpha)$ and
$\sigma(x_E) = \sqrt{\alpha(1-\alpha)}f(x_E)^{-1}$.
\end{remark}

\section{Supply-demand distributions, price and volume}
\label{sec:supdem}

In theorem~\ref{thm:xvdist} we derived the joint distribution of the
clearing price $X$ and the corresponding transacted volume $V$, given
supply and demand distributions $F_A,F_B$ and volumes $N_A,N_B$. In this subsection we explore the dependence of the distribution
of $(X,V)$ on $F_A$ and $F_B$. We shall fix $N_A$ and $N_B$ as equal
constants ($N_A=N_B=50$ in the examples below). It is also recalled
that the distribution for $(X,V)$ was derived in a setting with a discrete
price axis $\calX$ with ticksize $\delta>0$ (below, we take $\delta =0.01$);
normal distributions are discretized accordingly.

\subsection{Varying consensus between bid- and ask-side}

The supply and demand valuation distributions $F_A,F_B$ express a difference of
opinion concerning the valuation of the asset. Sell orders are typically
priced higher than buy orders, a fact expressed in a stochastic
way, through equation (\ref{eq:stochorder}). We first consider how shifts of
locations for $F_A,F_B$ influence the joint distribution of clearing
price and volume.

We consider three different choices of the supply and demand distributions,
denoted $F_A,F_B$, $\bar{F}_A,\bar{F}_B$ and $\tilde{F}_A,\tilde{F}_B$:
\begin{equation}
  \label{eq:consensusthree}
  F_A=\Phi_{10.1,0.1},\,F_B = \Phi_{9.9,0.1},\quad
  \bar{F}_A=\Phi_{10.05,0.1},\,\bar{F}_B=\Phi_{9.95,0.1},\quad
  \tilde{F}_A = \tilde{F}_B=\Phi_{10,0.1}.
\end{equation}
The first case represents a relatively large difference between the
locations of supply and demand distributions, while the second case
represents a small difference, and the third complete consensus. In
all three cases, the real equilibrium price is $x_E=10$, however, as can be seen from the left panel of figure~\ref{fig:fa_fb_shift},
\begin{equation}
  \label{eq:stochorderedFs}
  F_A(x_E)<\bar{F}_A(x_E)<\tilde{F}_A(x_E).
\end{equation}
\begin{figure}[htb]
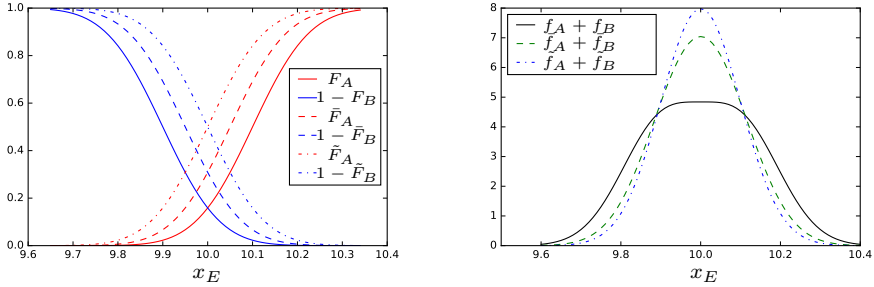

  \centering
    \parbox{13.5cm}{
      \begin{lpic}{fa_fb_shift(0.4)}
        \lbl[t]{122,70;{\tiny{$F_A$}}}	
        \lbl[t]{124,64;{\tiny{$1-F_B$}}}	
        \lbl[t]{122,58;{\tiny{$\bar{F}_A$}}}
        \lbl[t]{124,52;{\tiny{$1-\bar{F}_B$}}}		
        \lbl[t]{122,46;{\tiny{$\tilde{F}_A$}}}
        \lbl[t]{124,40;{\tiny{$1-\tilde{F}_B$}}}	 
        \lbl[t]{78,5;{\scriptsize{$x_E$}}}	     
      \end{lpic}
      \begin{lpic}{densitysums_case1(0.4)}
        \lbl[t]{45,88;{\tiny{$f_A+f_B$}}} 
        \lbl[t]{45,83;{\tiny{$\bar{f}_A+\bar{f}_B$}}} 
        \lbl[t]{45,77;{\tiny{$\tilde{f}_A+\tilde{f}_B$}}} 
        \lbl[t]{86,5;{\scriptsize{$x_E$}}}       
      \end{lpic}
      \caption{ \label{fig:fa_fb_shift} Left panel: distribution functions of supply/demand. Solid lines
      $F_A,F_B$; dashed lines $\bar{F}_A,\bar{F}_B$;
      dashed-dotted lines $\tilde{F}_A,\tilde{F}_B$. Right panel: sums of densities of supply/demand. Solid line $f_A+f_B$; dashed line $\bar{f}_A+\bar{f}_B$; dashed-dotted line $\tilde{f}_A+\tilde{f}_B$.}
      }
    \end{figure}

\begin{figure}
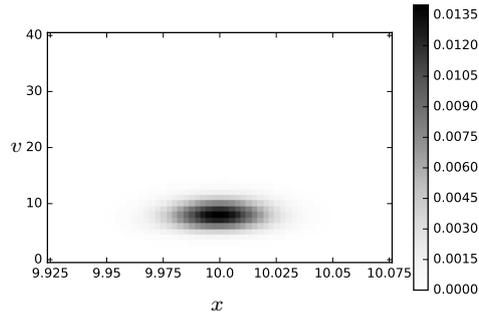
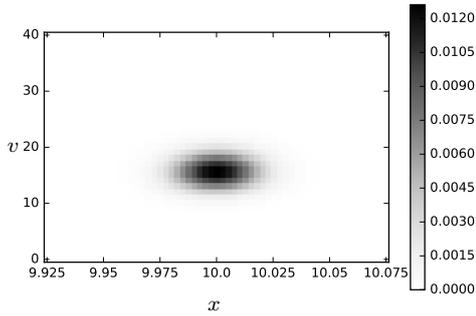
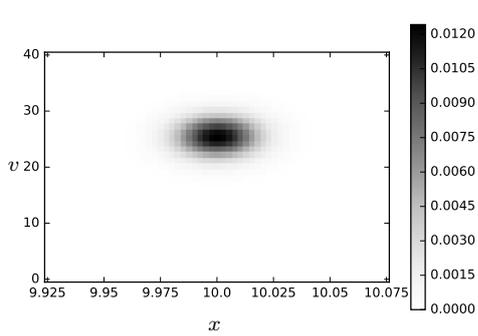

  \centering
    \parbox{13.5cm}{

    \centering
    \begin{subfigure}[bht]{0.4 \textwidth}
      \begin{lpic}{fxv_consensus1(0.48)}
        \lbl[t]{0,45;{\scriptsize{$v$}}}	
        \lbl[t]{55,1;{\scriptsize{$x$}}}	
      \end{lpic}
      \caption{Density for $(X,V)$ with valuation distributions $F_A=\Phi_{10.1,0.1},\,F_B = \Phi_{9.9,0.1}$.}	
    \end{subfigure}\\
    \begin{subfigure}[bht]{0.4 \textwidth}
      \begin{lpic}{fxv_consensus2(0.48)}
        \lbl[t]{0,45;{\scriptsize{$v$}}}	
        \lbl[t]{55,1;{\scriptsize{$x$}}}			
      \end{lpic}
	    \caption{Density for $(X,V)$ with valuation distributions $\bar{F}_A=\Phi_{10.05,0.1},\bar{F}_B=\Phi_{9.95,0.1}.$}     
    \end{subfigure}
    ~
    \begin{subfigure}[bht]{0.4 \textwidth}
      \begin{lpic}{fxv_consensus3(0.48)}
        \lbl[t]{0,45;{\scriptsize{$v$}}}	
        \lbl[t]{55,1;{\scriptsize{$x$}}}			
      \end{lpic}
      \caption{Density for $(X,V)$ with valuation distributions $\tilde{F}_A = \tilde{F}_B=\Phi_{10,0.1}.$} 
    \end{subfigure}
    \caption{\label{fig:fxv_shift} The influence of consensus
    between bid- and ask-side of the market on the distribution
    of $(X,V)$. Note: as the locations
    of supply and demand distributions diverge, transacted volume
    drops, while price uncertainty increases.}
    }
\end{figure}
Figure~\ref{fig:fxv_shift} shows the distributions of
price-volume in these three cases and suggests the following,
intuitively reasonable mechanism: as the locations of supply and
demand distributions diverge, marginally the transacted volume drops,
while the width of the price marginal increases. Note that the
location on the price-axis
does not change, as all three $(X,V)$-distributions are centred
around $x_E=10$. Referring to theorem~\ref{thm:highliq}, the
result reflects the ordering expressed by (\ref{eq:stochorderedFs}):
in the high-liquidity limit, $X$ lies close to $x_E$ and
$V=N_A \FA(X)$ (respectively, $\bar{V}=N_A \bar{\mathbb{F}}_A(X)$,
$\tilde{V}=N_A \tilde{\mathbb{F}}_A(X)$) lies close to $N_A F_A(x_E)$
(respectively, $N_A \bar{F}_A(x_E)$, $N_A \tilde{F}_A(x_E)$).
Similar arguments regarding the ordering of densities (see also the right panel of figure~\ref{fig:fa_fb_shift}),
\[
  f_A(x_E)+f_B(x_E)
    < \bar{f}_A(x_E)+\bar{f}_B(x_E)
    < \tilde{f}_A(x_E)+\tilde{f}_B(x_E),
\]
 provide an asymptotic explanation for the observed increase
in price uncertainty (\cf\ the denominator of the variance in
(\ref{eq:highliq}); the numerator is bounded and plays no role
here). To re-phrase and summarize: when consensus
between bid- and ask-sides increases, transacted volume increases
and price uncertainty decreases.

\subsection{Increased uncertainty among market participants}

Here we investigate the influence of valuation uncertainty among market participants
on the distribution of clearing price and transacted volume: we
consider three different choices of the supply and demand distributions,
denoted $F_A,F_B$, $\bar{F}_A,\bar{F}_B$, $\tilde{F}_A,\tilde{F}_B$,
\begin{equation}
  \label{eq:uncertaintythree}
  F_A=\Phi_{10.1,0.1},\,F_B=\Phi_{9.9,0.1},\quad
  \bar{F}_A=\Phi_{10.1,0.2},\,\bar{F}_B=\Phi_{9.9,0.2},\quad
  \tilde{F}_A=\Phi_{10.1,0.3},\,\tilde{F}_B=\Phi_{9.9,0.3}.
\end{equation}
The locations of supply and demand distributions are maintained,
while their variances are increased, reflecting growing uncertainty in valuation among individual market participants. Again, in all three cases, the real equilibrium price is
$x_E=10$ and (\ref{eq:stochorderedFs}) continues to hold (see the left panel of figure \ref{fig:fa_fb_scale}).
\begin{figure}[htb]
  \centering
    \parbox{13.5cm}{
      \begin{lpic}{fa_fb_scale(0.4)}
        \lbl[t]{36,61;{\tiny{$F_A$}}} 
        \lbl[t]{40,55;{\tiny{$1-F_B$}}} 
        \lbl[t]{36,50;{\tiny{$\bar{F}_A$}}}
        \lbl[t]{40,44;{\tiny{$1-\bar{F}_B$}}}   
        \lbl[t]{36,38;{\tiny{$\tilde{F}_A$}}}
        \lbl[t]{40,32;{\tiny{$1-\tilde{F}_B$}}}  
        \lbl[t]{78,5;{\scriptsize{$x_E$}}}       
      \end{lpic}
      \begin{lpic}{densitysums_case2_v2(0.4)}
        \lbl[t]{123,89;{\tiny{$f_A+f_B$}}} 
        \lbl[t]{123,83;{\tiny{$\bar{f}_A+\bar{f}_B$}}} 
        \lbl[t]{123,77;{\tiny{$\tilde{f}_A+\tilde{f}_B$}}} 
        \lbl[t]{123,70;{\tiny{$\munderbar{f}_A+\munderbar{f}_B$}}} 
        \lbl[t]{78,5;{\scriptsize{$x_E$}}}       
      \end{lpic}
      \caption{ \label{fig:fa_fb_scale} Left panel: distribution functions of supply/demand. Solid lines
      $F_A,F_B$; dashed lines $\bar{F}_A,\bar{F}_B$;
      dashed-dotted lines $\tilde{F}_A,\tilde{F}_B$. Right panel: sums of densities of supply/demand. Solid line $f_A+f_B$; dashed line $\bar{f}_A+\bar{f}_B$; dashed-dotted line $\tilde{f}_A+\tilde{f}_B$; dotted line $\munderbar{f}_A+\munderbar{f}_B$.}
      }
    \end{figure}

\begin{figure}[htb]
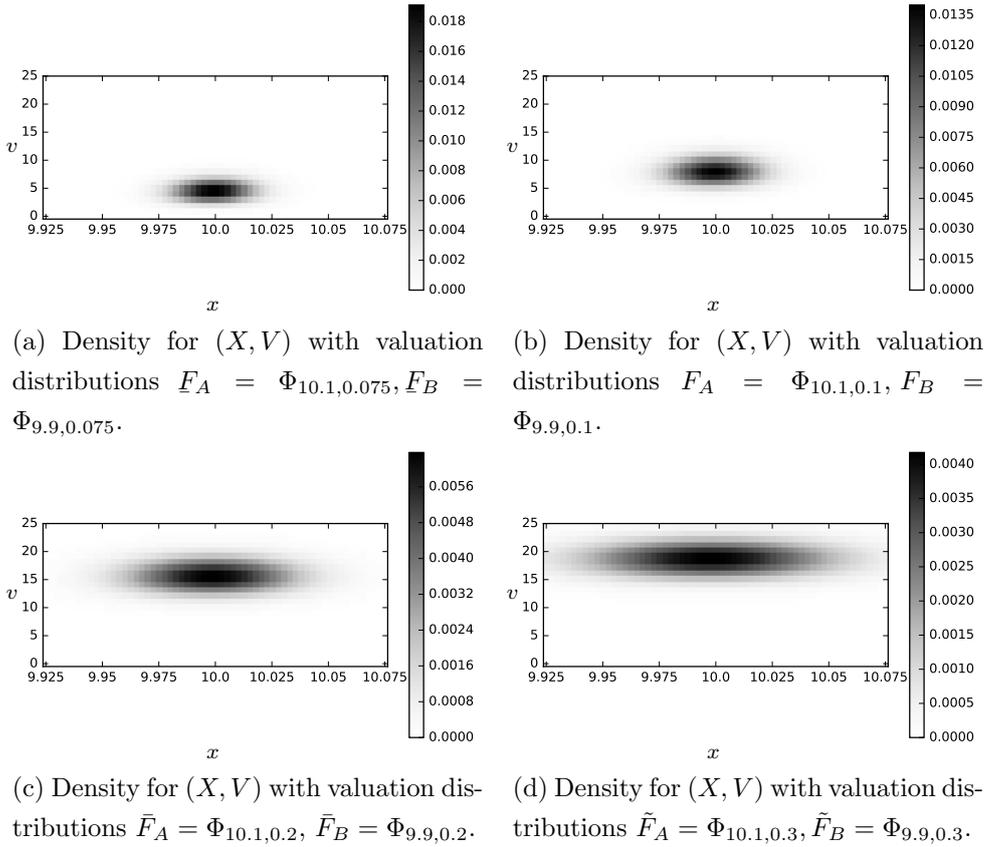

  \centering
    \parbox{13.5cm}{
    \begin{subfigure}[bht]{0.4 \textwidth}
      \begin{lpic}{fxv_uncertainty4_v2(0.48)}
        \lbl[t]{0,45;{\scriptsize{$v$}}}  
        \lbl[t]{55,1;{\scriptsize{$x$}}}  
      \end{lpic}
      \caption{Density for $(X,V)$ with valuation distributions $\munderbar{F}_A=\Phi_{10.1,0.075},\munderbar{F}_B=\Phi_{9.9,0.075}$.} 
    \end{subfigure}
    ~
    \begin{subfigure}[bht]{0.4 \textwidth}
      \begin{lpic}{fxv_uncertainty_v2(0.48)}
        \lbl[t]{0,45;{\scriptsize{$v$}}}  
        \lbl[t]{55,1;{\scriptsize{$x$}}}  
      \end{lpic}
      \caption{Density for $(X,V)$ with valuation distributions $F_A=\Phi_{10.1,0.1},\,F_B=\Phi_{9.9,0.1}$.~~~~~~~~~} 
    \end{subfigure}\\
    \begin{subfigure}[bht]{0.4 \textwidth}
      \begin{lpic}{fxv_uncertainty2_v2(0.48)}
        \lbl[t]{0,45;{\scriptsize{$v$}}}  
        \lbl[t]{55,1;{\scriptsize{$x$}}}      
      \end{lpic}
      \caption{Density for $(X,V)$ with valuation distributions $\bar{F}_A=\Phi_{10.1,0.2},\,\bar{F}_B=\Phi_{9.9,0.2}$.}     
    \end{subfigure}
    ~
    \begin{subfigure}[bht]{0.4 \textwidth}
      \begin{lpic}{fxv_uncertainty3_v2(0.48)}
        \lbl[t]{0,45;{\scriptsize{$v$}}}  
        \lbl[t]{55,1;{\scriptsize{$x$}}}      
      \end{lpic}
      \caption{Density for $(X,V)$ with valuation distributions $\tilde{F}_A=\Phi_{10.1,0.3},\tilde{F}_B=\Phi_{9.9,0.3}$.} 
    \end{subfigure}
    \caption{\label{fig:fxv_scale} The influence of valuation uncertainty among market participants on the distribution of $(X,V)$. }
    }
\end{figure}
Panels (b), (c) and (d) of figure~\ref{fig:fxv_scale} show the distributions of
price-volume in these three cases and suggests the following,
reasonable-sounding (but incomplete, see below) rule: as the
variance of the valuation distributions increases,
marginally both transacted volume and price uncertainty increase.
Note that the location on the price-axis does not change, as all
three $(X,V)$-distributions have marginals centred around $x_E=10$.
The asymptotic argument for the observed ordering
$V < \bar{V} < \tilde{V}$ continues to
hold. Note in the right panel of figure \ref{fig:fa_fb_scale}, however, that with locations and variances as chosen,
\[
  f_A(x_E)+f_B(x_E)
    > \bar{f}_A(x_E)+\bar{f}_B(x_E)
    > \tilde{f}_A(x_E)+\tilde{f}_B(x_E),
\]
so that asymptotic variance of the clearing price \emph{increases}
when valuations become more widely spread
(referring again to the variance in (\ref{eq:highliq})).

But this explains only half of the mechanism that the model ascribes
to the relation between valuation uncertainty and auction price variance.
To appreciate the other half, consider a fourth different pair $\munderbar{F}_A,\munderbar{F}_B$  of supply and demand distributions that reflects less valuation uncertainty among market participants, defined by
$$\munderbar{F}_A = \Phi_{10.1,0.075},~ \munderbar{F}_B = \Phi_{9.9,0.075}.$$
As can be seen from the right panel of figure \ref{fig:fa_fb_scale}, this choice of valuation distributions satisfies
$$\munderbar{f}_A(x_E) + \munderbar{f}_B(x_E) < f_A(x_E)+f_B(x_E),$$
which implies that the asymptotic variance of the clearing price also increases when we lower the variance of the valuation distributions. This is also confirmed by panel (a) of figure \ref{fig:fxv_scale}, where the distribution of price-volume for $\munderbar{F}_A,\munderbar{F}_B$ is shown.
 To explain this observed inversion, consider $F_A$ and $F_B$ that
are two normal distributions of equal variance $\sigma^2>0$,
located at $\mu_1,\mu_2\in\RR$. Reasoning again asymptotically,
the denominator of the expression for the variance in
(\ref{eq:highliq}) equals,
\begin{equation}
  \label{eq:inversevol}
  f_A(x_E)+f_B(x_E)=
  \sqrt{\frac{2}{\pi}}\frac1{\sigma}
    \exp\Bigl(-\frac12\frac{(\mu_1-\mu_2)^2}{\sigma^2}\Bigr).
\end{equation}
As a function of $\sigma$, (\ref{eq:inversevol}) has a \emph{maximum}
at $\sigma=\ft12|\mu_1-\mu_2|$ (see figure~\ref{fig:densitysums_vs_sigma} for an example), which means that asymptotic variance
of the clearing price is \emph{minimal} at said level of valuation uncertainty
$\sigma$. When $\sigma$ rises above
$\ft12|\mu_1-\mu_2|$, as in figure~\ref{fig:fxv_scale}, panels (c) and (d), auction
price variance increases; perhaps somewhat surprisingly, when $\sigma$  decreases below $\ft12|\mu_1-\mu_2|$, as in figure~\ref{fig:fxv_scale} panel (a), auction
price variance also increases.
The heuristic reason for this inversion is as follows: when
consensus between the bid- and ask-side of market is very low (large $|\mu_1-\mu_2|$) and valuation uncertainty among market participants is minimal (small $\sigma$), orders around $x_E$
are very scarce, so that clearing prices are based on \emph{small
numbers of matchable orders}, therefore displaying \emph{high
variance}; as the uncertainty in order prices on both sides increases,
more orders appear around $x_E$, lowering the variance of the
clearing price. The added valuation uncertainty `unlocks' an
otherwise \emph{illiquid market}, in which buyers and
sellers rarely cross. So in a market with illiquidity-driven
price movements, raised valuation uncertainty aids accurate price
discovery.

Combination with the previous subsection invites the following,
intuitively reasonable conclusion: \emph{observation of high
levels of price variance can be driven by illiquidity or
by valuation uncertainty among market participants; observation of the price and its
fluctuations alone does not distinguish between those cases.
To differentiate one must involve transacted volume, which is
moderate when price variance is minimal, low
in illiquid markets and high in markets with valuation uncertainty-driven price variance}.
\begin{figure}[h]
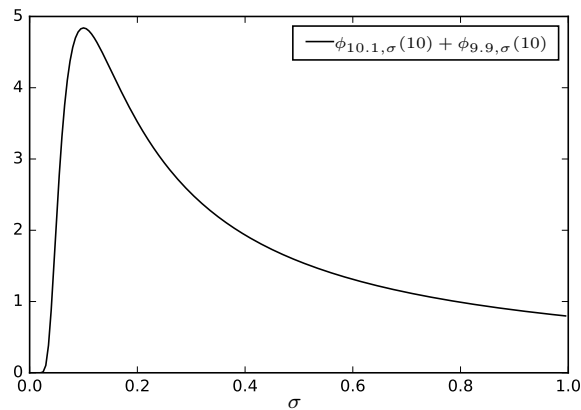

\centering
  \parbox{13.5cm}{
    \centering
    \begin{lpic}{densitysums_vs_sigma(0.6)}
      \lbl[t]{110,87;{\tiny{$\phi_{10.1,\sigma}(10)+\phi_{9.9,\sigma}(10)$ }}}
        \lbl[t]{78,7;{\scriptsize {$\sigma$} }}    
    \end{lpic}
    \caption{\label{fig:densitysums_vs_sigma} $f_A(x_E)+f_B(x_E)$ as a function of $\sigma$, for $f_A = \phi_{\mu_1,\sigma}, f_B =\phi_{\mu_2,\sigma}$, where $\mu_1=10.1, \mu_2=9.9$ and $x_E=10$. Note this function attains its maximum at $\sigma = \frac12|\mu_1-\mu_2| = 0.1$, implying that the asymptotic clearing price variance of equation \ref{eq:highliq} attains its minimum at this $\sigma$.}
    }
    \end{figure}

\section{Impact of market orders}
\label{sec:impact}

In definition~\ref{def:refprice_el} the clearing price in the presence
of excess liquidity $\Delta$ is defined and its distribution is
provided in proposition~\ref{prop:eqprice_el_dist}. Modelling
the arrival of market orders as excess liquidity, this subsection
compares clearing prices with and without market orders.
Differences between clearing price distributions form the
model's perspective on the \emph{price impact of market orders},
a subject that has received quite some attention in the
literature (see \eg\ \cite{Hasbrouck91,Lilloetal03, Smithetal03,Donierbonart15,Donieretal15,Benzaquenbouchaud18} and references therein).

Consider again the case that $F_A=F_B=\Phi_{\mu,\sigma}$,
for $\mu=10,~\sigma=0.1$ and $(N_A,N_B) \sim \text{Pois}(50)^2$.
Departing from the case that this market
is in equilibrium, next suppose that a market order of size
$|\omega|$ arrives: as in eq.~(\ref{eq:refprice_el}), we add an
\emph{excess liquidity} term to model this, in the form of
constant functions $\Delta(x) = \omega$, where $\omega>0$ corresponds
to a buy order and $\omega<0$ represents a sell order.
\begin{figure}[htb]
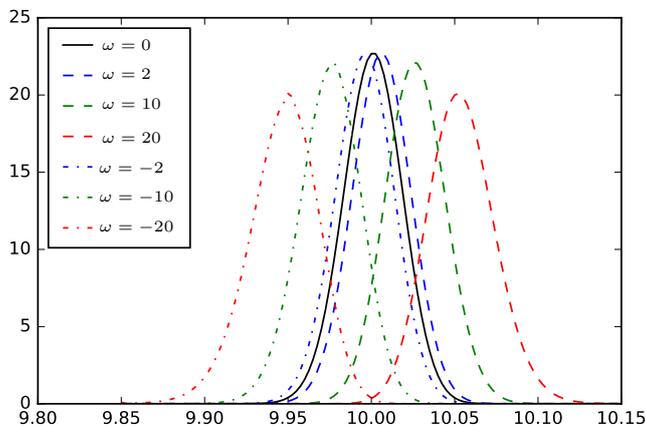

  \centering
    \parbox{13.5cm}{
    \centering
    \begin{lpic}{el_market_order_poiss_v2(0.65)}
      \lbl[t]{37,87;{\tiny{$\omega = 0$}}}
      \lbl[t]{37,81;{\tiny{$\omega = 2$}}}
      \lbl[t]{38,75;{\tiny{$\omega = 10$}}}
      \lbl[t]{38,68;{\tiny{$\omega = 20$}}}
      \lbl[t]{38,62;{\tiny{$\omega = -2$}}}
      \lbl[t]{39,56;{\tiny{$\omega = -10$}}}
      \lbl[t]{39,50;{\tiny{$\omega = -20$}}}    
    \end{lpic}
    \caption{\label{fig:marketorder_pois} Clearing price densities, when
    market orders of sizes $|\omega|$ are placed (negative $\omega$
    for sell orders, positive for buy orders). The supply and demand
    distributions are equal and normal, $F_A=F_B=\Phi_{\mu,\sigma}$, for
    $\mu = 10,~\sigma =0.1$, while $(N_A,N_B)\sim\text{Pois}(50)^2$. Note that orders of size $|\omega|=2$ do not
significantly influence the price distribution, but orders of sizes
$|\omega|= 10$ or $20$ shift the clearing price distribution noticeably.}
    }
\end{figure}
In figure~\ref{fig:marketorder_pois} the resulting clearing price
distributions are 
plotted for various $\omega$.
\begin{figure}
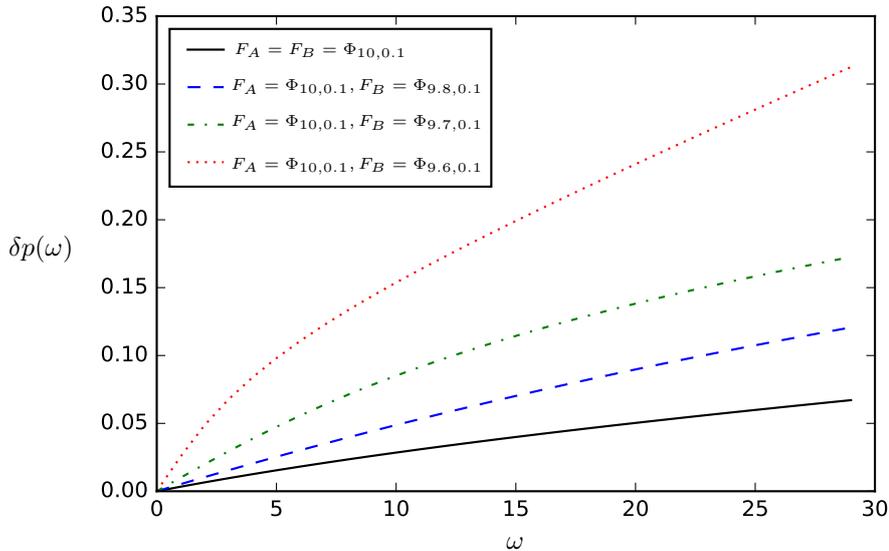

  \centering
    \parbox{13.5cm}{
    \centering
    \begin{lpic}{price_impact_beta(0.8)}
      \lbl[t]{46,87;{\tiny{$F_A=F_B=\Phi_{10,0.1}$}}}
      \lbl[t]{52,81;{\tiny{$F_A = \Phi_{10,0.1},F_B=\Phi_{9.8,0.1}$}}}
      \lbl[t]{52,75;{\tiny{$F_A = \Phi_{10,0.1},F_B=\Phi_{9.7,0.1}$}}}
      \lbl[t]{52,68;{\tiny{$F_A = \Phi_{10,0.1},F_B=\Phi_{9.6,0.1}$}}}
      \lbl[t]{0,55;{\small{$\delta p(\omega)$}}}
      \lbl[t]{78,5;{\small{$\omega$}}}
    \end{lpic}
    \caption{\label{fig:priceimpact} Price impact $\delta p$ as a function
    of the size $\omega$ of the market order, for various supply and
    demand distributions $F_A,F_B$. In all cases $N=100$, $N_A=\alpha N$,
    $\alpha\sim\text{Beta}(2,2)$.}
    }
\end{figure}
The common definition of the \emph{price impact function} $\delta p(\omega)$ is
the size of the shift in market price when a market order of size
$\omega$ arrives. Empirical studies (see \eg\ \cite{Donierbonart15,Hasbrouck91,Lilloetal03}, among many others) have shown that, in the situation of a continuous
double auction, the price impact function is concave, and certain
models confirm this concavity (see \eg\ the seminal paper by \cite{Smithetal03}, or more recent work in this area \cite{ Benzaquenbouchaud18,Donieretal15}). To consider the matter in our model, we
define the price impact function $\delta p(\omega)$ as the shift in
\emph{expectation} of $X$ when a buy market order of size
$\omega>0$ arrives.
Figure~\ref{fig:priceimpact} shows price impact functions for various
supply and demand distributions that display the expected concavity.
Furthermore, the picture shows that price impact becomes less
concave as supply and demand distributions are shifted together,
with the case $F_A=F_B$ almost linear. This difference is explained
by the number of orders that can be expected around $\mathbb{E}X$.
In the case $F_A=F_B$, $\mathbb{E}X$ lies around $x_E=10$ and all orders lie
around $10$. In cases where the locations of $F_A$ and $F_B$ differ,
$\mathbb{E} X$ lies between them, while buy orders concentrate
around a lower price and sell orders around a higher one. In that
situation fewer orders lie around $\mathbb{E}X$ and consequently the
clearing price is impacted more significantly in such regions;
by contrast, in regions where orders are more concentrated, the
clearing price is less easily moved. Comparing with
\cite{Smithetal03}, their model produces an almost linear price impact function for a situation in which there is a large accumulation of orders near the market price and a very concave price impact function with lower levels of accumulation near the market price.


\section{Prediction of the closing price distribution}
\label{sec:application}

For a quantitative model, a convincing statistical demonstration of
applicability is ultimately the only possible proof of relevance.
Below we perform this statistical exploration:
we consider the statistical quality of the model's clearing
price distributions in daily closing auctions for five (randomly
selected) Eurostoxx 50 index constituents with the Kolmogorov-Smirnov
goodness-of-fit test, and find that they explain the
randomness in observed closing prices well. More specifically, we
use a day's transactions to estimate clearing price distributions
for daily closing auctions of five shares over the course of the
trading year 2017. We assume that we have observed the market until 5 pm and then want to predict the closing price distribution.\footnote{The choice of the prediction time of 5 pm is not completely arbitrary. We have found empirically that around 90\% of the closing prices falls within a range of 30bps of the last mid-price, and that the closing price is generally very close to the last mid-price. Hence, there is not much to predict when we wait until the closing auction starts, as the last mid-price is then more informative than our prediction. Of course we could start prediction already before 5 pm, but then the quality of the estimators will get worse, as less transaction data is observed. }To assess performance, we keep track of the
quantiles of realized closing prices according to the estimated
clearing price distribution: if the estimates are accurate (and
approximately independent), these quantiles form an approximate
\iid\ sample from the uniform distribution on $[0,1]$. The match
is assessed graphically, through QQ-plots, and tested with the
Kolmogorov-Smirnov statistic.
As a simple benchmark, the results are compared with results from
a log-normal model.

\subsection{Estimation of the closing price distribution}

To obtain the daily estimator for the clearing price distribution, we first need estimators for the supply- and demand-distributions $F_A$ and $F_B$. As we want to predict the closing price distribution $F_X$ before the start of the closing auction, it is not an option to use quote data from the closing auction itself. Instead, intra-day transaction data is used: throughout the trading day, all
transactions are recorded in a book that aggregates total
volume traded for any price tick in the daily price range. In
fact, two such books are kept, distinguished by the side of
the market that initiated the trade. Half an hour before
market close these books are normalized and converted 
into histogram-like estimators for the densities $f_A$ and
$f_B$. Expressed cumulatively, this leads to `empirical
distributions functions' $\hat{\mathbb{F}}_A(\cdot)$ and
$\hat{\mathbb{F}}_B(\cdot)$ that serve as estimators for
$F_A$ and $F_B$. Essentially we use a volume-weighted version
of the day's \emph{transacted} orders to estimate market participants'
valuations. This leads to reasonable estimators, based on the idea that the intra-day valuations of market participants
will be reflected in their valuations in the closing auction.

For any choice of $N_A,N_B$, these daily estimators can be used to
estimate the distribution for $X|N_A,N_B$, that day's clearing
price given order flow $N_A,N_B$. Because
$\hat{\mathbb{F}}_A(\cdot)$ and $\hat{\mathbb{F}}_B(\cdot)$ are
supported on the range of prices visited that day, the clearing
price distribution is supported on that range too. This causes a disadvantage of the proposed equilibrium model:
regardless of the order flow, the model does not predict
anything outside the daily price range and the estimator should
be viewed as a `windowed' or conditioned device, relevant only
conditional on auction prices that fall inside the daily price
range.
To model order flow, we convolute with an order flow marginal
in which $N$ is fixed and $N_A=\alpha N$, with $\alpha$
distributed according to a Beta-distribution. Because we have
no reason to assume asymmetry, only the scale $\beta>0$
in $\alpha\sim\text{Beta}(\beta,\beta)$ varies. To not exclude
the possibility of fairly extreme, one-sided order flow
(where $N_A\gg N_B$ or vice versa with high probability),
we keep $\beta<1$ (this is empirically verified, see remark \ref{rem:betaN} below).
\begin{definition} \label{def:estcpdist} The daily
\emph{estimated clearing price distribution} is the
distribution $\hat{\mathbb{F}}_X$ that results from
theorem~\ref{thm:cpdist} with empirical supply and demand
$F_A=\hat{\mathbb{F}}_A$, $F_B=\hat{\mathbb{F}}_B$, convoluted
with the order flow distribution.
\end{definition}
In all examples below, we choose $N=100$ and $\beta=0.75$ and
note that these choices appear to work well for the five
Eurostoxx 50 index constituents considered below.
\begin{remark} \label{rem:betaN}
It is important to note that the model is robust with respect to these choices. The choice of $N$ is not very important: at first sight $N$ has a very clear interpretation as the number of orders in the auction, however it should be noted that the order size of orders is not modelled, hence changing $N$ could also be interpreted as changing the order size. As long as $N$ is taken sufficiently large to allow for enough diversification in the orders (already for $N>50$) the choice of $N$ does not really affect the result,  as the large liquidity limit starts to do its work. \footnote{There is also an easy way to estimate $N$: around 28\% of the daily total transacted volume is transacted in the closing auction (in 2017, nowadays it is more), so one could take the total transacted volume until 5 pm and turn this into an estimator for $N$ using this ratio. However, figure \ref{fig:roleN} shows that this analysis is not worthwile, as a different choice of $N$ does not impact the results.} These claims are supported by figure \ref{fig:roleN}. The choice of $\beta$ has more influence on the results, as it determines how many mass is shifted into the tails of the distribution. But still, the model is robust with respect to small variations in $\beta$, which is expressed by figure \ref{fig:rolebeta}. Estimation of the parameter $\beta$ is also interesting: one can take the $n$ previous closing auctions of the stock and compute the ratio $\frac{N_A}{N_A+N_B}$, \ie\ the total volume of all sell orders in the closing auction, divided by the total volume of all orders. \footnote{In fact, the order book contains a lot of irrelevant volume far from the eventual closing price. This volume does not contribute to the determination of the closing price and should not be counted in the estimation. Instead, we only counted orders within ten levels of the closing price.} Then one obtains a sample of $n$ order flow imbalances, which can be treated as an \iid\ sample to find the parameter of the Beta-distribution, by computing moment estimators. We did this analysis for three of the five Eurostoxx 50 index constituents considered below. \footnote{One needs full order book data to do the estimation, which is provided for the stocks traded on Euronext, but not for the German stocks Bayer AG and Deutsche Telekom AG.} For Airbus SE we found $\hat{\beta} = 0.8057$, for Engie SA we found $\hat{\beta} = 0.8219$ and for Anheuser Busch Inbev NV we found $\hat{\beta}=0.6988$, empirically justfying the observation that $\beta$ should be picked around 0.75.
\end{remark} 
\begin{figure}[htb]
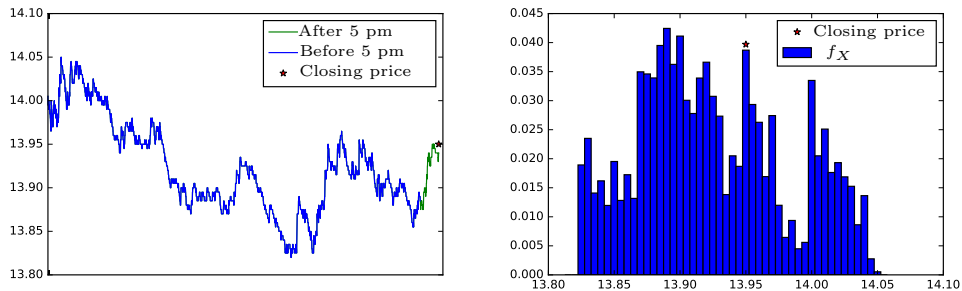

  \centering
  \parbox{13.5cm}{
    \begin{subfigure}{0.4\textwidth}
      \begin{lpic}{example_realprices(0.44)}
        \lbl[t]{108,88;{\tiny{After 5 pm}}} 
        \lbl[t]{110,82;{\tiny{Before 5 pm}}}  
        \lbl[t]{110,76;{\tiny{\,\,Closing price}}}      
      \end{lpic}
    \end{subfigure}
    ~
    \begin{subfigure}[bht]{0.4\textwidth}
      \begin{lpic}{example_realprices_df(0.44)}
        \lbl[t]{114,88;{\tiny{\,\,Closing price}}}  
        \lbl[t]{106,82;{\tiny{$f_X$}}}  
      \end{lpic}
    \end{subfigure}
    \caption{\label{fig:realprices} Estimation of the closing price
    distribution, based on daily transaction data. Left panel: ING stock
    price during a trading day in 2017. Right panel: the estimated
    distribution of the closing price $X$, where the distributions of
    supply and demand are taken to be the empirical estimators of
    $F_A$ and $F_B$ (based on the data until 5 pm) and where
    $N=100$, $N_A = \alpha N$, $\alpha\sim \text{Beta}(0.75,0.75)$.}
  }
\end{figure}
As an example, consider figure~\ref{fig:realprices},
an (arbitrarily selected) day's trading in ING stocks and the
estimate of the closing price distribution $\hat{\mathbb{F}}_X$ (based on $F_A,~F_B$ that are estimated from daily transaction data until 5 pm). Note the inhomogeneity
of the estimated density. The following statistical analysis shows that this detailed shape with peak and troughs is informative for the realized closing price, meaning that the closing price is more likely to be realized on prices where the estimated density is higher, which is nicely illustrated in the above example by the closing price that realizes on one of the peaks in the estimated density.

\subsection{Kolmogorov-Smirnov goodness-of-fit test}

To assess performance, we predict closing price distributions
for the circa 250~trading days in 2017. Because the model only
concerns the daily trading range, we do not include those trading
days on which the stock's closing price lay \emph{outside} the
daily trading range. Typically, trading on such days is highly
momentum-driven and is not well-represented by equilibrium models, at least, on daily or shorter time-scales. After removing the days where the price closed outside the daily trading range, this leads to samples of 200-230 trading days for the selected
five stocks. As the valuation distributions $F_A,F_B$ differ from day
to day, there is no straightforward way to assess the accuracy
of the sample of estimated clearing price distributions
$\hat{\mathbb{F}}_X$. 
For that, we need a standard, distribution-free argument based
on the observation that if $X\sim F_X$, then $F_X^{-1}(X)\sim U[0,1]$:
if $\hat{\mathbb{F}}_X$ approximates $F_X$ well on any trading
day, $\hat{\mathbb{F}}_X^{-1}(X)$ has a distribution approximating
$U[0,1]$. In our statistical
experiment we have a sequence of predictions $\hat{\mathbb{F}}_i$
for closing prices $X_i$ (assumed independent) with true marginal
closing price distributions $F_i$ (which are possibly
very different as the day $i$ varies). If
the estimators $\hat{\mathbb{F}}_i$ approximate the $F_i$ well,
the resulting sequence $\hat{\xi}_i=\hat{\mathbb{F}}_i^{-1}(X_i)$ is
distributed approximately as an \iid\ sample from the uniform
distribution on $[0,1]$. Below, this degree of approximation is
assessed graphically through QQ-plots and tested with statistical
significance using the Kolmogorov-Smirnov(KS) statistic.
This statistical assessment is not just a technically
convenient choice, what is assessed in this way is highly relevant
to daily market practice: good QQ-plots and
KS-statistics indicate that clearing price distribution estimators provide
an accurate picture of the relation between quoted price and
probability of execution in the auction (conditional on a closing price inside the price range seen during the day). For example, from a trader's perspective, the estimated quantiles could give rise to a trading strategy that goes long/short when the market price lies in the low/high quantiles half an hour before the market closes. From an investor's point of view, the $p$th percentile of the clearing price distribution answers the question at which price to quote in the auction to be for $p$\% sure that the order gets transacted.

To have a simple benchmark for comparison we also consider an
alternative: we include a benchmark model that assumes that the daily
log-return is normally distributed, with mean and variance estimated
by (volume-weighted) average and variance of log-prices of transactions
during the day. The resulting estimated closing price distribution
$\tilde{\mathbb{F}}_{i}$ is truncated to that day's trading
range. This leads to samples of $200-230$ quantiles $\tilde{\xi}_{i}=\tilde{\mathbb{F}}_{i}^{-1}(X_i)$, subject to the same requirement of similarity to
an \iid\ sample from the $U[0,1]$-distribution.
The two samples $\hat{\xi}_i$ (resulting from the market clearing model) and $\tilde{\xi}_{i}$ (resulting from the log-normal model)
are assessed for uniformity by QQ-plots in figure \ref{fig:qq}.

Table~\ref{table:ks} reports the associated KS-statistics and $p$-values. (Note that the KS-test does not fall within
the standard Neyman-Pearson framework of statistical testing, basically
because one seeks to \emph{confirm} the null-hypothesis. This changes
the usual interpretation of $p$-values: if a model
has a low $p$-value in this context, the hypothesis that it is correct
is rejected based on the data with high statistical
significance. By contrast, a model with a high $p$-value requires
a high degree of relaxation of significance criteria before the
correctness hypothesis is rejected based on the data.)
\begin{table}[thb]
\caption{ Kolmogorov-Smirnov statistics and
    corresponding $p$-values for the samples of quantiles
    $\hat{\xi}_i$ (resulting from the clearing model) and $\tilde{\xi}_{i}$ (resulting from the log-normal model)
    for five (randomly selected) constituents of the Eurostoxx 50 index.}
  
  \parbox{0.85\textwidth}{
  \begin{tabular}{|l|l|l|l|}
    \hline
    Stock & Model & KS-statistic    & $p$-value     \\ \hline
    1. Engie SA & Market clearing    & 0.0392 & 0.905 \\ \hline
      & Log-normal & 0.147 & 2.04*$10^{-4}$ \\ \hline
    2. Airbus SE &Market clearing  & 0.0320 & 0.988 \\ \hline
    & Log-normal & 0.164 & 3.76*$10^{-5}$ \\ \hline
    3. Bayer AG & Market clearing   & 0.0326 & 0.983 \\ \hline
      & Log-normal & 0.139 & 6.96*$10^{-4}$ \\ \hline
    4. Anheuser Busch Inbev NV & Market clearing&0.104&0.0198\\ \hline 
     & Log-normal & 0.175 & 4.61*$10^{-6}$ \\ \hline
    5. Deutsche Telekom AG & Market clearing& 0.0304 & 0.989  \\ \hline
    & Log-normal & 0.0837 & 0.0969 \\ \hline
    \end{tabular}
    }
    \label{table:ks}
\end{table}
The model of log-normal daily returns proves
wholly inadequate as an explanation of the randomness observed in
actual closing prices: only in the example of Deutsche Telekom is
it possible to argue that (truncated) log-normal distributions for the daily returns form
a prediction that is informative about closing prices at the
distributional, predictive level. Furthermore, figure~\ref{fig:qq} shows that the log-normal model
underestimates the tail of the closing price distributions in all five examples.
By contrast, the QQ-plots for the market clearing model show
a very decent match for uniformity, indicating that the model
is a good representation of the randomness in observed
closing prices. Estimated clearing price distributions
provide an accurate picture of the relation between quoted price and probability of execution (conditional on
an auction price that falls within the daily trading range).
This is confirmed by associated KS-statistics and their
$p$-values in table~\ref{table:ks}: four out of five
samples show exceptionally straight lines in their QQ-plots,
confirmed by the exceptionally high $p$-values in Table
\ref{table:ks}. The exception (recall, these five stocks
have been selected randomly from the Eurostoxx 50 index) is
the Anheuser Busch Inbev NV stock, with a KS-statistic that
indicates evidence ($p=0.0198$) to reject the null-hypothesis
and visual inspection through the QQ-plot reveals
underestimation of the up-side tail.
\begin{figure}[t!]
\captionsetup[subfigure]{justification=justified,singlelinecheck=false}
  \centering
  \parbox{0.85\textwidth}{
    \begin{subfigure}[t]{0.4\textwidth}
      \begin{lpic}{qq_nolabel_388276(0.30)}  
      \end{lpic}
      \subcaption*{\hspace{0.1cm}1a. Engie SA, market clearing}
    \end{subfigure}
    ~
    \begin{subfigure}[t]{0.4\textwidth}
      \begin{lpic}{qq3_388276(0.30)}  
      \end{lpic}
      \subcaption*{\hspace{0.1cm}1b. Engie SA, log-normal}
    \end{subfigure}
    \\
    \begin{subfigure}[t]{0.4\textwidth}
      \begin{lpic}{qq_nolabel_85691(0.30)}  
      \end{lpic}
      \subcaption*{\hspace{0.1cm}2a. Airbus SE, market clearing}
    \end{subfigure}
    ~
    \begin{subfigure}[t]{0.4\textwidth}
      \begin{lpic}{qq3_85691(0.30)}  
      \end{lpic}
      \subcaption*{\hspace{0.1cm}2b. Airbus SE, log-normal}
    \end{subfigure}
    \\
    \begin{subfigure}[t]{0.4 \textwidth}
      \begin{lpic}{qq_nolabel_62(0.30)}
      \end{lpic}
      \subcaption*{\hspace{0.1cm}3a. Bayer AG, market clearing}
    \end{subfigure}
    ~
    \begin{subfigure}[t]{0.4 \textwidth}
      \begin{lpic}{qq3_62(0.30)}
      \end{lpic}
      \subcaption*{\hspace{0.1cm}3b. Bayer AG, log-normal}
    \end{subfigure}
    \\
    \begin{subfigure}[t]{0.4 \textwidth}
      \begin{lpic}{qq_nolabel_625720(0.30)}
      \end{lpic}
      \subcaption*{\hspace{0.1cm}4a. AB Inbev NV, market clearing}
    \end{subfigure}
    ~
    \begin{subfigure}[t]{0.4 \textwidth}
      \begin{lpic}{qq3_625720(0.30)}
      \end{lpic}
      \subcaption*{\hspace{0.1cm}4b. AB Inbev NV, log-normal}
    \end{subfigure}
    \\
    \begin{subfigure}[t]{0.4\textwidth}
      \begin{lpic}{qq_nolabel_68(0.30)}  
      \end{lpic}
      \subcaption*{\hspace{0.1cm}5a. Deutsche Telekom AG, market clearing}
    \end{subfigure}
    ~
    \begin{subfigure}[t]{0.4\textwidth}
      \begin{lpic}{qq3_68(0.30)}  
      \end{lpic}
      \subcaption*{\hspace{0.1cm}5b. Deutsche Telekom AG, log-normal}
    \end{subfigure}
    \caption{\label{fig:qq}QQ-plots of the samples of quantiles of closing prices (vertical axis) against theoretical $U[0,1]$-quantiles(horizontal axis), for the market clearing model and the log-normal model, for 5 Eurostoxx 50 index constituents.}
  }
\end{figure}
One could wonder how the market clearing model performs if the extreme beta distribution for order flow imbalance is replaced by Poisson order flow (this essentially corresponds to the call auction model of \cite{Toke15}). To investigate this option, we performed exactly the same analysis using the estimated clearing price distribution (as in definition \ref{def:estcpdist}) to obtain a sample of quantiles, but now with Poisson order flow: $(N_A,N_B) \sim \text{Pois}(50) \times \text{Pois}(50)$. Figure \ref{fig:tokeqq} shows the corresponding QQ-plot for one of the stocks (Airbus SE), similar results are obtained for the other stocks. The Poisson order flow, expressing the conviction that the orderflow imbalance $\alpha$ does not display great stochastic fluctuation around $\alpha=\frac12$, leads to a clearing price distribution that extremely underestimates the tails (even worse than the log-normal model). It turns out that the extreme order flow distributions are necessary to capture the tails of closing price distributions, underlining the limitations of Poisson order flow.
\begin{figure}[hbt]
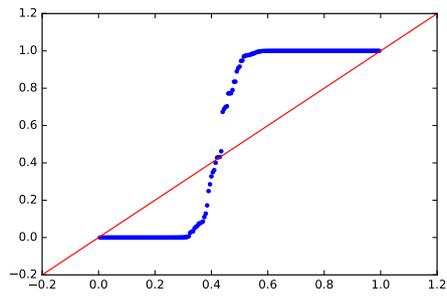

  \centering
  \parbox{13.5cm}{
  \centering
      \begin{lpic}{qqToke_85691(0.44)}     
      \end{lpic}
      
   \caption{\label{fig:tokeqq} QQ-plot of the sample of quantiles of closing prices of Airbus SE (vertical axis) against theoretical $U[0,1]$-quantiles(horizontal axis), for the market clearing model with $(N_A,N_B)\sim\text{Pois}(50) \times \text{Pois}(50)$.}
   }
\end{figure}

We conclude that the detailed shape of estimated clearing price
distributions from the market clearing model (with peaks and
troughs as in the right panel of figure~\ref{fig:realprices})
is informative for the relation between the price of an order and the corresponding execution
probability, while uni-modal shapes like those of the
log-normal distribution are not. Furthermore, we conclude that Poisson order flow does not display enough stochastic fluctuation to capture the tails of the observed randomness in closing prices, emphasizing the relevance of extreme order flow distributions. 
\section{Conclusions}
\label{sec:conclusion}

In this article we propose a model for
auction price distributions in standard call
auctions based on a balance between two samples of random
orders. The model assumes \iid\ samples of buy- and sell-orders,
placed following demand- and supply-side valuation
distributions. An equilibrium equation (fixing the clearing
price by requiring that the number of buyers equals
the number of sellers) then leads to a
distribution for clearing price and transacted volume.
Bid- and ask-side volumes are left as free parameters (order flow); a choice for the distribution of
these parameters (possibly heavy-tailed or very skewed) leads to distributions for clearing
prices and transacted volumes, with or without a
limit order book.

In the highly liquid auctions of section~\ref{sec:asymp},
the clearing price distribution converges to a normal
central limit, with mean and variance in terms of
supply/demand-valuation distributions and order flow imbalance.
Most importantly, the variance of the limiting normal
distribution at real equilibrium price $x$ is inversely
proportional to the density of orders around $x$. The
interpretation is in regions on the price axis where price variance is
suppressed due to density of orders.

In subsection~\ref{sec:liqdist}, we consider the influence
of order flow on clearing price distributions.
Restriction to models involving Poisson or binomial assumptions
concerning the amount of liquidity on offer is hard to justify.
As confirmed empirically in section~\ref{sec:application}, extreme or skewed
order flow conditions are equally important.
Section~\ref{sec:supdem} explores the influence of
valuation distributions with some illustrative simulations:
for example, bringing valuation distributions closer together
increases transacted volume and decreases price variance. Closer
inspection of the price/volume distribution reveals that
there are two fundamentally different types of price variance, one driven by illiquidity and the other
by valuation uncertainty among market participants. To differentiate, one must involve
transacted volume, which is moderate when auction price
variance is minimal, low in illiquid markets and high in
markets with valuation uncertainty-driven price variance.

In section~\ref{sec:impact}, we analyse the model's
description of market impact. Remarkably, the model produces
a concave price impact function, especially when the valuation
distributions are widely separated, reflecting a market
in which the consensus is low. This is in line with
empirical results \cite{Hasbrouck91,Lilloetal03,Donierbonart15} and with the
theoretical results of \cite{Smithetal03}.

To statistically verify the validity of the model and
estimates of the daily closing price distributions
in section~\ref{sec:application},
we predict a year's worth of daily closing-price distributions
for five constituents of the Eurostoxx 50 index; Kolmogorov-Smirnov
statistics and QQ-plots demonstrate with ample statistical
significance that the model predicts
closing price distributions accurately, and compares
favourably with a simpler, log-normal, alternative
method of prediction. We conclude that the model's predicted clearing price distributions explain the observed randomness in closing prices well, confirming that the proposed model provides a proper description of price formation in call auctions.

\appendix

\section{Notation and proofs}
\label{app:proofs}

\subsubsection*{Notation and conventions}
We denote the multinomial coefficient for $n \geq 3$ by
\[
  {n \choose k_1,\dots,k_n} := \frac{n!}{k_1! \dots k_n!}.
\]
The binomial distribution with parameters $n$ and $p$ is denoted
$\Bin(n,p)$, the Poisson distribution with parameter $\lambda$
is denoted by $\text{Pois}(\lambda)$, the uniform distribution
on $[0,1]$ is denoted by $U[0,1]$ and the normal distribution
with mean $\mu$ and variance $\sigma^2$ is denoted by
$N(\mu,\sigma^2)$ with cumulative distribution function
$\Phi_{\mu,\sigma}(\cdot)$. Convergence in distribution is denoted $\conv{w.}$.  Let $\mathcal{X}\subset \RR$ be
the \emph{price-axis} which can be either discrete or
continuous. The lowest possible price is denoted by
$x_0:=\inf \calX$. The valuation distributions for supply and
demand prices, denoted $F_A$ and $F_B$, are assumed to be
distributions on the price-axis. 

\subsubsection*{Proofs}
The expressions we derive for price and price-volume distributions hinge
on the following two lemmas, which convert finding a solution to eq.
(\ref{eq:eq_equation}) into a question involving binomial distributions.
\begin{lemma}
\label{lem:technlemma}
For any $x \in \RR$, we have the equivalence:
$X \leq x \Leftrightarrow \DA(x) \geq \DB(x)$.
\end{lemma}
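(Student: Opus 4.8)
The plan is to reduce the statement to a monotonicity-plus-right-continuity observation about the single function $g(x):=\DA(x)-\DB(x)$. First I would record the structural properties of $g$: since $x\mapsto\DA(x)=N_A\FA(x)=\sum_i 1_{\{A_i\leq x\}}$ is non-decreasing and right-continuous (it is a scaled empirical distribution function), and $x\mapsto\DB(x)=N_B(1-\FB(x))=\sum_j 1_{\{B_j> x\}}$ is non-increasing and right-continuous, their difference $g$ is non-decreasing and right-continuous on $\RR$.

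Next I would examine the set $S:=\{x\in\RR:\DA(x)\geq\DB(x)\}=\{x:g(x)\geq0\}$, whose infimum is by Definition~\ref{def:eqprice} exactly the clearing price $X$. Because $g$ is non-decreasing, $S$ is an up-set: if $x\in S$ and $y\geq x$, then $g(y)\geq g(x)\geq0$, so $y\in S$. Moreover $S$ is non-empty, since for $x$ larger than all submitted bid prices one has $\DB(x)=0\leq\DA(x)$ (and, symmetrically, $g(x)=-N_B<0$ for $x$ below all bid prices, so that $X>-\infty$ in the non-degenerate case $N_B\geq1$). Hence $S$ is an interval, unbounded to the right, with left endpoint $X=\inf S$.

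The only point requiring a little care is whether $X$ itself lies in $S$, i.e.\ whether $S=[X,\infty)$ rather than $(X,\infty)$; this is where right-continuity enters. Taking any sequence $x_n\downarrow X$ with $x_n\in S$, right-continuity of $g$ gives $g(X)=\lim_n g(x_n)\geq0$, so $X\in S$ and therefore $S=[X,\infty)$. The claimed equivalence is then immediate: $X\leq x\iff x\in[X,\infty)=S\iff\DA(x)\geq\DB(x)$.

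I do not expect a serious obstacle here; the whole argument is the observation that the difference of the supply and demand step functions is monotone and right-continuous, so that the superlevel set $\{g\geq0\}$ is a closed half-line and the infimum in Definition~\ref{def:eqprice} is an attained minimum. The only thing to watch is stating the right-continuity step cleanly and, if one wants the equivalence to hold literally for every $x\in\RR$, disposing of the degenerate boundary cases (such as $N_B=0$, where $X=-\infty$).
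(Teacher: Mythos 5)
Your proof is correct and follows essentially the same route as the paper: both arguments show that $\{x:\DA(x)\geq\DB(x)\}$ is an up-set by monotonicity and then use right-continuity of $\DA$ and $\DB$ to conclude it is a closed half-line $[X,\infty)$, from which the equivalence is immediate. Packaging the argument through the single function $g=\DA-\DB$ and the remark on degenerate cases are only cosmetic differences.
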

\begin{proof}
The left implication follows immediately from the definition of $X$, so
suppose $X \leq x$. Note that $x\mapsto\DA(x)$ is non-decreasing
and $x\mapsto\DB(X)$ is non-increasing. So the set
$\{y \in \RR : \DA(y) \geq \DB(y)\}$ is of the form $(a,\infty)$
or $[a,\infty)$, for some $a \in \RR$.
Through their definitions, $\DA$ and $\DB$ are right-continuous, so we
can write,
\[
  \DA(a) = \lim_{z \downarrow a} \DA(z) \geq \lim_{z \downarrow a} \DB(z)
  = \DB(a).
\]
Therefore $\{y \in \RR : \DA(y) \geq \DB(y)\}= [a,\infty)$, which implies that
$a = \inf\{y \in \RR : \DA(y) \geq \DB(y)\} \leq x$. Hence $x \in [a,\infty)
= \{y \in \RR : \DA(y) \geq \DB(y)\}$, which proves the result. 
\end{proof}
The independence assumption for $(A_1,\dots,A_{N_A})$ and $(B_1,\dots,B_{N_B})$ directly implies the content of the following lemma.
\begin{lemma}\label{lem:bindist}
For every $x \in \calX$, $(\DA(x),\DB(x))$ is a pair of independent,
binomially distributed random variables,
\begin{equation}
  \label{eq: bindist}
  (\DA(x), \DB(x)) \sim \Bin(N_A, F_A(x))\times \Bin(N_B, 1-F_B(x)).
\end{equation}
\end{lemma}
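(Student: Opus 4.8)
The plan is to unwind the definitions of $\DA(x)$ and $\DB(x)$ as sums of indicator functions and then recognise each as a sum of independent and identically distributed Bernoulli variables. First I would write, directly from the meaning of these counts (the number of submitted sell orders priced at most $x$, respectively the number of submitted buy orders priced strictly above $x$),
\[
  \DA(x) = \sum_{i=1}^{N_A} 1_{\{A_i \le x\}}, \qquad
  \DB(x) = \sum_{j=1}^{N_B} 1_{\{B_j > x\}}.
\]

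Next, since $(A_1,\dots,A_{N_A})$ is an \iid\ sample from $F_A$, the summands $1_{\{A_i \le x\}}$ are \iid\ Bernoulli with success probability $\PP(A_i \le x) = F_A(x)$, whence $\DA(x) \sim \Bin(N_A, F_A(x))$. Symmetrically, since $(B_1,\dots,B_{N_B})$ is an \iid\ sample from $F_B$, the summands $1_{\{B_j > x\}}$ are \iid\ Bernoulli with success probability $\PP(B_j > x) = 1 - F_B(x)$, so $\DB(x) \sim \Bin(N_B, 1 - F_B(x))$. Independence of the pair $(\DA(x),\DB(x))$ then follows because $\DA(x)$ is a measurable function of the ask-sample alone and $\DB(x)$ is a measurable function of the bid-sample alone, and these two samples are independent by assumption.

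There is no genuine obstacle here — the content of the lemma is essentially a restatement of the modelling assumptions — and the only point deserving a moment's care is that $\DB$ counts strict exceedances $B_j > x$, so the relevant Bernoulli parameter is $1 - F_B(x)$ rather than something like $\PP(B_j \ge x)$; but $\PP(B_j > x) = 1 - F_B(x)$ holds for every distribution function, so this causes no difficulty even at atoms of $F_B$, and the identity $(\DA(x),\DB(x)) \sim \Bin(N_A,F_A(x))\times\Bin(N_B,1-F_B(x))$ follows as stated.
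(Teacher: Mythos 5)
Your proof is correct and coincides with what the paper intends: the paper simply asserts that the lemma follows directly from the independence and \iid\ assumptions on the bid- and ask-samples, and your write-up spells out exactly that direct argument (indicator decomposition, Bernoulli summands with parameters $F_A(x)$ and $1-F_B(x)$, independence of the pair from independence of the two samples). Nothing is missing.
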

These two lemmas imply the following explicit expression for the clearing
price distribution in terms of the distributions of supply and demand,
$F_A$ and $F_B$, conditional on $N_A$ and $N_B$.
\thmcpdist*
\begin{proof}
From lemma~\ref{lem:technlemma} and the independence of $\DA(x)$ and $\DB(x)$
it follows that,
\[
  \begin{split}
    \PP(X \leq x ) &= \PP(\DA(x) \geq \DB(x))
      = \sum_{k=0}^{N_A} \PP(\DB(x) \leq k |\DA(x) = k)\PP(\DA(x) = k) \\ 
    & = \sum_{k=0}^{N_A} \sum_{l=0}^{N_B \wedge k} \PP(\DB(x) = l) \PP(\DA(x) = k),
  \end{split}
\]
where conditioning on $N_{A}, N_{B}$ has been omitted for ease of notation. 
The result follows from lemma~\ref{lem:bindist}.
\end{proof}
Similarly, we derive the joint distribution of $(X,V)$ from
eq.(\ref{eq:eq_equation}). Recall that the price-axis $\calX$ is
a discrete set, $\calX :=\{x_0, x_0 + \delta, \dots\}$, for some
$\delta>0$.
\thmxv*
\begin{proof}
In order to characterize the transacted volume $V$ in a similar sense as the clearing price in Lemma \ref{lem:technlemma}, define the generalized inverses $\DA^{-1}$ and $\DB^{-1}$ of $\DA$ and $\DB$ by 
\begin{align*}&\DA^{-1}(v) := \FA^{-1}(v/N_A) = \inf\{x \geq 0 : \DA(x) \geq v\},& \\
&\DB^{-1}(v) := \FB^{-1}(1-v/N_B) = \inf\{x \geq 0 : \DB(x) \leq v\}, \hfill&
\end{align*}
where $\FA^{-1},\FB^{-1}$ are the generalized inverses of the empirical cumulative distribution functions $\FA,\FB$ (for a distribution function $F$, its generalized inverse is defined as $F^{-1}(p) = \inf \{x \in \RR: F(x)\geq p\}$, for $p \in [0,1]$, see \eg\ \cite{Vaart98}, Chapter 21). For a given distribution function $F$, its generalized inverse satisfies
$$F^{-1}(p) \leq x \Leftrightarrow p \leq F(x),$$ 
which implies
\begin{equation}
\label{eq: DADB_inv_ineq} \DA^{-1}(v) \leq x \Leftrightarrow v \leq \DA(x),~\DB^{-1}(v) \leq x \Leftrightarrow v \geq \DB(x). 
\end{equation}
It follows from equation (\ref{eq: DADB_inv_ineq}) that $V$ is characterized by the following equivalences.
\begin{equation} \label{eq: v_equiv}
V \leq v \Leftrightarrow \DA(X) < v+1 \Leftrightarrow X < \DA^{-1}(v+1)\Leftrightarrow X \leq \DA^{-1}(v+1) -\delta,
\end{equation}
which leads to
$$ X \leq x, V \leq v \Leftrightarrow X \leq \min(x,\DA^{-1}(v+1) -\delta).$$
So we can write
\begin{equation} \label{eq: twodiffterms_xv} 
 \begin{split} 
\PP(X \leq x, V \leq v)& = \PP(X \leq \min(x,\DA^{-1}(v+1) -\delta))  \\
&=\PP(X \leq x, \DA^{-1}(v+1)-\delta >x) \\
&\qquad + \PP(X \leq \DA^{-1}(v+1)-\delta, \DA^{-1}(v+1)-\delta \leq x)  \\
& =  \PP(X \leq x, \DA^{-1}(v+1) >x+\delta) + \sum_{y \in \mathcal{X}, y \leq x}\PP(X \leq y, \DA^{-1}(v+1) =y+\delta). 
\end{split}
\end{equation}
Here, and in remainder of the proof, we have omitted the conditioning on $N_A,N_B$ in the notation, for convenience.
We start with the first term in this expression. From lemma \ref{lem:technlemma} and (\ref{eq: DADB_inv_ineq}) it follows that
\begin{align*} \PP(X \leq x, \DA^{-1}(v+1) >x+\delta) & = \PP(\DA(x) \geq \DB(x), \DA(x+\delta) < v+1) \\
& = \sum_{u=0}^v  \PP(\DA(x) \geq \DB(x), \DA(x+\delta) = u),
\end{align*}
where, by independence of the bid- and ask-samples,
\begin{equation} \label{eq: equation_part_xv}
\begin{split}
 \PP(\DA(x) &\geq \DB(x), \DA(x+\delta) = u) \\
 &= \sum_{k=0}^u \PP(\DB(x) \leq k) \PP(\DA(x) = k, \DA(x+\delta)=u)  \\
& = \sum_{k=0} ^u \sum_{l=0}^k \left[ { N_B \choose l}(1-F_B(x))^l F_B(x)^{N_B-l}  \times \frac{N_A!}{k!(u-k)!(N_A-u)!}  F_A(x)^k \right.  \\
&\qquad \qquad \times(F_A(x+\delta)-F_A(x))^{u-k}(1-F_A(x+\delta))^{N_A-u} \left.\vphantom{\frac12} \right],
\end{split}
\end{equation}
which gives the first term of the solution in equation (\ref{eq: result_xvdist}). \\
Now we turn to the second term in equation (\ref{eq: twodiffterms_xv}), for which we write
\begin{equation} \label{eq: equation_part_xv2} \begin{split} 
\sum_{y \in \mathcal{X}, y \leq x} &\PP(X \leq y, \DA^{-1}(v+1) =y+\delta) \\
&= \sum_{y \in \mathcal{X}, y \leq x} \PP(X \leq y, \DA^{-1}(v+1) \leq y+\delta)- \sum_{y \in \mathcal{X}, y \leq x} \PP(X \leq y, \DA^{-1}(v+1) \leq y) \\
&= \sum_{y \in \mathcal{X}, y \leq x} \PP(\DA(y) \geq \DB(y), \DA(y+\delta) \geq v+1) \\
&\qquad-\sum_{y \in \mathcal{X}, y \leq x}  \PP(\DA(y) \geq \DB(y), \DA(y) \geq v+1),
\end{split}
\end{equation}
where the last line follows by  Lemma \ref{lem:technlemma} and equation (\ref{eq: DADB_inv_ineq}).
The first term of this expression equals 
\begin{equation*}\sum_{y \in \mathcal{X}, y \leq x} \PP(\DA(y) \geq \DB(y), \DA(y+\delta) \geq v+1) = \sum_{y \in \mathcal{X}, y \leq x} \sum_{u=v+1}^{N_A} \PP(\DA(y) \geq \DB(y), \DA(y+\delta) = u)
\end{equation*}
 and its expression follows from equation (\ref{eq: equation_part_xv}), by substituting $y$ for $x$. This gives the second term of the solution in equation (\ref{eq: result_xvdist}). Finally, consider the second term in equation (\ref{eq: equation_part_xv2}), which equals
\begin{equation*}
\sum_{y \in \mathcal{X}, y \leq x}  \PP(\DA(y) \geq \DB(y), \DA(y) \geq v+1) = \sum_{y \in \mathcal{X}, y \leq x} \sum_{l=0}^{N_B}  \PP(\DA(y) \geq \max(l,v+1))\PP(\DB(y) = l),
\end{equation*}
by independence of the bid- and ask-samples. Using the binomial distributions of $\DA(y)$ and $\DB(y)$ once more, we see that this equals the last term in the solution of equation (\ref{eq: result_xvdist}), which concludes the proof.
\end{proof}

\highliqlimit*
\begin{proof}
The assumption that $F_A$ and $F_B$ are continuous implies that the steps
of $\DA$ and $\DB$ all have size 1, almost surely. So we have, almost surely,
\[
  N_A \FA(X) = N_B(1-\FB(X))+\Delta(X).
\]
Combination with (\ref{eq:real_eq_price}) yields,
\[
  N_A(\FA(X) - F_A(x_E)) = -N_B(\FB(X) -F_B(x_E)) +\Delta(X), 
\]
which, after introduction of $F_A(X)$ and $F_B(X)$, reads,
\begin{equation}
\label{eq:proof_highliqlimit1}
\begin{split}
  \sqrt{\frac{N_A}{N_B}}(&\FA(X) - F_A(X))
    + \sqrt{\frac{N_A}{N_B}}(F_A(X) - F_A(x_E))\\
  & = -\sqrt{\frac{N_B}{N_A}}(\FB(X) - F_B(X))
    -\sqrt{\frac{N_B}{N_A}}(F_B(X) - F_B(x_E))+ \frac{\Delta(X)}{\sqrt{N_AN_B}}.
\end{split}
\end{equation}
Now denote 
\[
  Z_{A,N_A}(x) = \sqrt{N_A}(\FA(x) - F_A(x)), \quad
  Z_{B,N_B}(x) = \sqrt{N_B}(\FB(x) - F_B(x)).
\]
By Donsker's theorem (see \eg\ \cite{Vaart98}, Theorem 19.3) and independence
of the bid- and ask-samples, it holds that
\[
  (Z_{A,N_A}(x),Z_{B,N_B}(x)) \conv{w.}
  N\bigl(0,F_A(x)(1-F_A(x))\bigr)\times N\bigl(0,F_B(x)(1-F_B(x))\bigr),
\]
as $N_A,N_B \to \infty$, uniformly over $x \in \RR$ (and hence for every
random $X$). Using $N_A = \alpha N, N_B=(1-\alpha)N$ and
$D(x) = \Delta(x)/\sqrt{N}$, we can rewrite (\ref{eq:proof_highliqlimit1})
as follows,
\[
  \begin{split}
  \sqrt{\frac{\alpha}{1-\alpha}}(F_A(X)&-F_A(x_E))
    + \sqrt{\frac{1-\alpha}{\alpha}}(F_B(X)-F_B(x_E))\\
  &= -\frac{1}{\sqrt{N(1-\alpha)}} Z_{A,N_A}(X)
    - \frac{1}{\sqrt{\alpha N}} Z_{B,N_B}(X)
    + \frac{D(X)}{ \sqrt{N(1-\alpha)\alpha}}.
  \end{split}
\]
Hence, we obtain the following weak limit,
\begin{equation}
\label{eq:proof_highliqlimit2}
  \frac{\sqrt{N}}{\tau(X)}\biggl(\alpha(F_A(X)
  -F_A(x_E))
    + (1-\alpha)(F_B(X) - F_B(x_E))\biggr)
    -\frac{D(X)}{\tau(X)} \conv{w.} N(0,1),
\end{equation}
where the asymptotic variance $\tau^2(X)$ is given by,
\[
  \tau^2(X) = \alpha\,F_A(X)(1-F_A(X))+(1-\alpha)F_B(X)(1-F_B(X)).
\]
With the help of the distribution function $F_R$, defined by
the convex combination,
\[
  F_R(\cdot) = \alpha\,F_A(\cdot) + (1-\alpha)F_B(\cdot),
\]
we rewrite equation (\ref{eq:proof_highliqlimit2}) as follows,
\[
  \frac1{\tau(X)}\bigl(\sqrt{N}(F_R(X)-F_R(x_E))
  - D(X)\bigr) \conv{w.} N(0,1).
\]
Since $0<\tau(X)<1$ and $D$ is bounded, we conclude that
$F_R(X)$ converges to $F_R(x_E)$ in probability. The assumptions on
$F_A$ and $F_B$ ensure that $F_R$ has a Lebesgue density $f_R$ and
that $F_R$ is invertible with continuous inverse $F_R^{-1}:[0,1]\to\RR$,
so it follows that $X$ converges to $x_E$ in probability. By continuity
it follows that $\tau(X)$ converges in probability to $\tau(x_E)$ and
$D(X)$ to $D(x_E)$. By Slutsky's Lemma (see \eg\ \cite{Vaart98}, Lemma 2.8), we arrive at,
\[
  \sqrt{N}(F_R(X)-F_R(x_E)) \conv{w.} N(D(x_E),\tau^2(x_E)).
\]
The Delta-method (see \eg\ \cite{Vaart98}, Theorem 3.1) then leads to,
\[
  \sqrt{N}(X-x_E) \conv{w.} (F_R^{-1})'(F_R(x_E)) N(D(x_E),\tau^2(x_E)),
\]
where, according to the inverse function theorem,
\[
  (F_R^{-1})'(F_R(x_E)) = \frac{1}{f_R(x_E)}.
\]
\end{proof}
\subsection*{{\normalfont \textbf{Additional Figures}}}
\begin{figure}[h]
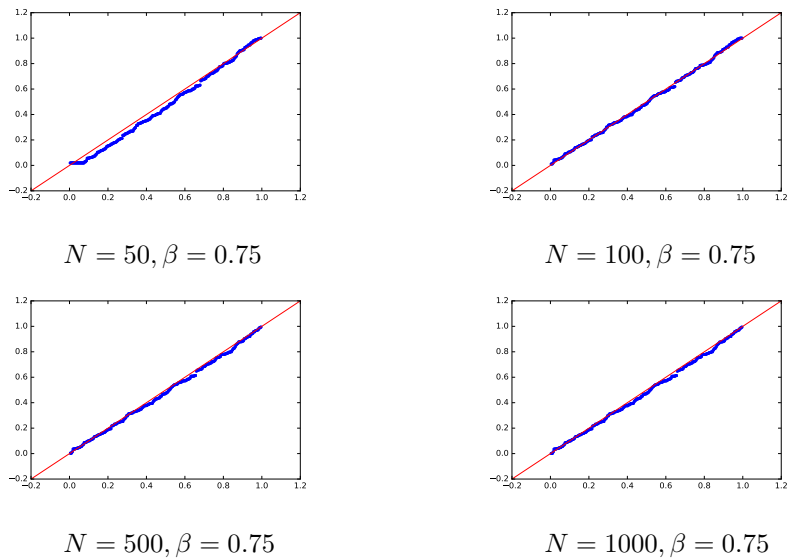

\captionsetup[subfigure]{justification=justified,singlelinecheck=false}
  \centering
  \parbox{0.85\textwidth}{
    \begin{subfigure}[t]{0.4\textwidth}
      \begin{lpic}{qq2_68-50-0.75(0.30)}  
      \end{lpic}
      \subcaption*{\hspace{1cm}$N=50, \beta=0.75$}
    \end{subfigure}
        \begin{subfigure}[t]{0.4\textwidth}
      \begin{lpic}{qq_68-100-0.75(0.30)}  
      \end{lpic}
      \subcaption*{\hspace{1cm}$N=100, \beta=0.75$}
    \end{subfigure}

        \begin{subfigure}[t]{0.4\textwidth}
      \begin{lpic}{qq2_68-500-0.75(0.30)}  
      \end{lpic}
      \subcaption*{\hspace{1cm}$N=500, \beta=0.75$}
    \end{subfigure}
        \begin{subfigure}[t]{0.4\textwidth}
      \begin{lpic}{qq2_68-1000-0.75(0.30)}  
      \end{lpic}
      \subcaption*{\hspace{1cm}$N=1000, \beta=0.75$}
    \end{subfigure}
    
    \caption{\label{fig:roleN} How the choice of $N$ affects the results of section \ref{sec:application}, for the case of Deutsche Telekom AG (similar effects are observed for the other stocks). It is seen that the choice of $N$ does not really impact the results, as long as $N$ is sufficiently large ($N>50$). }}
\end{figure}
\begin{figure}[b]
  \centering
\parbox{0.85\textwidth}{
\captionsetup[subfigure]{justification=justified,singlelinecheck=false}
  \parbox{0.85\textwidth}{
        \begin{subfigure}[t]{0.4\textwidth}
      \begin{lpic}{qq2_68-100-0.65(0.30)}  
      \end{lpic}
      \subcaption*{\hspace{1cm}$N=100, \beta=0.65$}
    \end{subfigure}
        \begin{subfigure}[t]{0.4\textwidth}
      \begin{lpic}{qq2_68-100-0.7(0.30)}  
      \end{lpic}
      \subcaption*{\hspace{1cm}$N=100, \beta=0.7$}
    \end{subfigure}
    
        \begin{subfigure}[t]{0.4\textwidth}
      \begin{lpic}{qq_68-100-0.75(0.30)}  
      \end{lpic}
      \subcaption*{\hspace{1cm}$N=100, \beta=0.75$}
    \end{subfigure}
        \begin{subfigure}[t]{0.4\textwidth}
      \begin{lpic}{qq2_68-100-0.8(0.30)}  
      \end{lpic}
      \subcaption*{\hspace{1cm}$N=100, \beta=0.8$}
    \end{subfigure}
    
    \centering \hspace{-7.5mm}
        \begin{subfigure}[t]{0.4\textwidth}
      \begin{lpic}{qq2_68-100-0.85(0.30)}  
      \end{lpic}
      \subcaption*{\hspace{1cm}$N=100, \beta=0.85$}
    \end{subfigure} 
    }
    \caption{\label{fig:rolebeta} How the choice of $\beta$ affects the results reported in section \ref{sec:application}, for the case of Deutsche Telekom AG (similar effects are observed for the other stocks). It is seen that the results are robust with respect to the choice of $\beta \in (0.65,0.85)$.}}
\end{figure}


\begin{thebibliography}{99}
{
\footnotesize
\renewcommand{\baselinestretch}{1.00}
\setlength{\itemsep}{0.0ex}
\setlength{\parskip}{0.1ex}
\bibitem{Abergeljedidi13} {\sc F.\ Abergel and A.\ Jedidi}, {\it A mathematical approach to order book modeling.}
International Journal of Theoretical and Applied Finance, {\bf 16}(5) (2013).
\bibitem{Benzaquenbouchaud18} {\sc M.\ Benzaquen and J.\-P.\ Bouchaud}, {\it Market impact with multi-timescale liquidity.} Quantitative Finance {\bf 18}(11), 1781-1790 (2018).
\bibitem{Biaisetal95} {\sc B. Biais, P. Hillion and C. Spatt}, {\it An Empirical Analysis of the Limit Order Book and the Order Flow in the Paris Bourse.} The Journal of Finance, {\bf 50}, 1655-1689 (1995).
\bibitem{Bouchaudetal02}{\sc J.\-P.\ Bouchaud, M. M\'ezard  and  M. Potters }, {\it Statistical properties of stock order books: empirical results and models.}
Quantitative Finance {\bf 2}(4), 251-256 (2002).
\bibitem{Bressanfacchi13} {\sc A. Bressan and G. Facchi}, {\it A bidding game in a continuum limit order book.} SIAM Journal on Control
and Optimization {\bf 51}, 3459-3485 (2013).
\bibitem{Bressanwei16} {\sc A. Bressan and D. Wei}, {\it Dynamic stability of the Nash equilibrium for a bidding game.} Analysis
and Applications {\bf 14}(4), 1-24 (2016).
\bibitem{ChalletStinchcombe01}{\sc D. Challet and R. Stinchcombe},
{\it Analyzing and modeling 1+1d markets.}
Physica A: Statistical Mechanics and its Applications {\bf 300}(1-2), 285-299 (2001).
\bibitem{Contdelarrard13} {\sc R.\ Cont and A.\ De Larrard}, {\it Price Dynamics in a Markovian Limit Order Market.} SIAM Journal
on Financial Mathematics {\bf 4}(1), 1-25 (2013).
\bibitem{Contetal10} {\sc R.\ Cont, S.\ Stoikov and R.\ Talreja}, {\it A Stochastic Model for Order Book Dynamics.} Operations Research { \bf 58}(3), 549-563 (2010).
\bibitem{Donierbonart15} {\sc J.\ Donier. and J.\ Bonart,} {\it A million metaorder analysis of market
impact on the bitcoin.} Market Microstructure and Liquidity {\bf 1}(2),
1550008 (2015).
\bibitem{Donieretal15} {\sc J.\ Donier, J.\ Bonart, I.\ Mastromatteo and J.\-P.\ Bouchaud}, {\it A fully consistent, minimal model for
non-linear market impact.} Quantitative Finance {\bf 15}(7), 1109-1121 (2015).
\bibitem{EN} {\sc Euronext}, {\it Euronext rule book} (2019),\\ (available online at \url{https://www.euronext.com/en/regulation/harmonised-rules}).
\bibitem{Foucault99}{\sc T.\ Foucault}, {\it Order flow composition and trading costs in a dynamic limit order market.} Journal of Financial Markets {\bf 2}, 99-134 (1999).
\bibitem{Goettleretal05}{\sc R.\ L.\ Goettler, C.\ A.\ Parlour, and U.\ Rajan}, {\it Equilibrium in a dynamic limit order market.}
The Journal of Finance {\bf 60}(5), 2149-2192 (2005).
Journal of Financial Markets {\bf 2}, 99-134 (1999).
\bibitem{Hasbrouck91} {\sc J. Hasbrouck}, {\it Measuring the information content of stock trades.} Journal of Finance {\bf 46}, 179–207 (1991).
\bibitem{Lilloetal03} {\sc F.\ Lillo, J.\ D.\ Farmer and R.\ N.\ Mantegna}, {\it Collapse of price
impact function.} Nature {\bf 421}, 129-130 (2003).
\bibitem{Luckock03} {\sc H. Luckock}, {\it A steady-state model of the continuous double auction.} Quantitative Finance { \bf3 }(5), 385-404 (2003).
\bibitem{Mendelson82}{\sc H.\ Mendelson,} {\it  Market behavior in a clearing house.} Econometrica {\bf 50}(6), 1505-1524 (1982).
\bibitem{Toke15}{\sc I.\ Muni Toke,} {\it Exact and asymptotic solutions of the call auction problem.}  Market microstructure
and liquidity {\bf 1}(1), 1550001 (2015).
\bibitem{Toke15_2} {\sc I.\ Muni Toke,} {\it The order book as a queueing system: average depth and influence of the size of limit orders.} Quantitative Finance {\bf 15}(5), 795-808 (2015).
\bibitem{Parlour98} {\sc C.\ A.~Parlour,} {\it Price dynamics in limit order markets.} The Review of Financial Studies
{\bf 11}(4),789-816 (1998).
\bibitem{PottersBouchaud03} {\sc M. Potters and J.\ P.~Bouchaud},
{\it More statistical properties of order books and price impact.}
Physica A: Statistical Mechanics and its Applications
{\bf 324}(1-2), 133-140 (2003).
\bibitem{Rosu09} {\sc I.\ Rosu}, {\it A dynamic model of the limit order book.} Review of Financial Studies {\bf 22} 4601-4641.
(2009)
\bibitem{Smithetal03} {\sc E.\ Smith, J.\ D.\ Farmer, L.\ Gillemot and S.\ Krishnamurthy,} {\it Statistical theory of the continuous double auction.} Quantitative Finance {\bf 3}(6), 481-514 (2003).
\bibitem{Vaart98} {\sc A.\ van der Vaart,}
  {\it Asymptotic Statistics.}
  Cambridge University Press (1998).
}

\end{thebibliography}
\end{document}